\documentclass[11pt,letterpaper]{article}
\usepackage{fullpage}
\usepackage[top=2cm, bottom=4.5cm, left=2.5cm, right=2.5cm]{geometry}
\usepackage{amsmath,amsthm,amsfonts,amssymb,amscd}
\usepackage{lastpage}
\usepackage{enumerate}
\usepackage{fancyhdr}
\usepackage{mathrsfs}
\usepackage{xcolor}
\usepackage{graphicx}
\usepackage{listings}
\usepackage[colorlinks=true, urlcolor=blue, linkcolor=blue, citecolor=magenta]{hyperref}
\usepackage{cite}
\usepackage{multirow}
\usepackage{comment}
\usepackage{mathpazo}

\hypersetup{%
  colorlinks=true,
  linkcolor=red,
  linkbordercolor={0 0 1}
}

\lstdefinestyle{Matlab}{
    language        = matlab,
    frame           = lines, 
    basicstyle      = \footnotesize,
    keywordstyle    = \color{blue},
    stringstyle     = \color{green},
    commentstyle    = \color{red}\ttfamily
}
\setlength{\parskip}{0.05in}
\setlength{\parindent}{2em}

\fancyhf{}

\headheight 35pt
\lfoot{}
\cfoot{\small\thepage}
\rfoot{}
\headsep 1.5em

\newtheorem{theorem}{Theorem}
\newtheorem{definition}[theorem]{Definition}

\newtheorem{question}{Question}

\newtheorem{lemma}[theorem]{Lemma}
\newtheorem{hypothesis}[theorem]{Hypothesis}
\newtheorem{corollary}[theorem]{Corollary}

\newcommand{\setcover}{\textsc{SetCover}}
\newcommand{\ksetcover}{\textsc{$k$-SetCover}}
\newcommand{\wtwo}{\mathsf{W[2]}}
\newcommand{\wone}{\mathsf{W[1]}}
\newcommand{\fpt}{\mathsf{FPT}}

\begin{document}

\title{Constant Approximating Parameterized $k$-SetCover is W[2]-hard}
\author{
Bingkai Lin \thanks{Nanjing University. Email: \texttt{lin@nju.edu.cn}}
\and
Xuandi Ren\thanks{University of California, Berkeley. Email: \texttt{xuandi\_ren@berkeley.edu}}
\and
Yican Sun\thanks{School of Computer Science, Peking University. Email: \texttt{sycpku@pku.edu.cn}}
\and 
Xiuhan Wang\thanks{Tsinghua University. Email: \texttt{wangxh19@mails.tsinghua.edu.cn}}
}

\maketitle

\begin{abstract}
    In this paper, we prove that it is $\wtwo$-hard to approximate \ksetcover{} within any constant ratio. Our proof is built upon the recently developed  threshold graph composition technique. We propose a strong notion of  threshold graphs and use a new composition method to prove this result. Our technique could also be applied to rule out polynomial time $o\left(\frac{\log n}{\log \log n}\right)$ ratio approximation algorithms for the non-parameterized \ksetcover{} problem with $k$ as small as $O\left(\frac{\log n}{\log \log n}\right)^3$, assuming $\wone \ne \fpt$. 
    We highlight that our proof does not depend on the well-known PCP theorem, and only involves simple combinatorial objects. 
\end{abstract}

\section{Introduction}
\label{sec:intro}
 
In the \ksetcover{} problem, we are given a bipartite graph $G = (S \,\dot\cup\, U, E)$ and an integer $k$, and the goal is to decide whether there exist $k$ vertices in $S$ such that each node in the \emph{universe set} $U$ is a neighbor of one of the $k$ vertices.  In the classic complexity regime, this problem is known to be {\sf NP}-complete \cite{kar72}. Thus assuming $\mathsf{P}\neq \mathsf{NP}$, no algorithm can solve \ksetcover{} in polynomial time. To circumvent 
this intractability, many previous works have focused on efficient approximation algorithms for \ksetcover{}. It admits a simple greedy algorithm with approximation ratio $(\ln n - \ln \ln n + \Theta(1))$~ \cite{cha79,joh74,Lov75,sla97,ste74}. On the opposite side, the hardness of approximation of \ksetcover{} has also been intensively studied~ \cite{alon2006algorithmic,dinste14,feige1998threshold,lunyan94,razsaf97}. The state-of-the-art result by Dinur and Stenur~\cite{dinste14} shows that approximating  \ksetcover{} within an $(1-\varepsilon)\cdot \ln n$ factor for any $\varepsilon>0$ is {\sf NP}-hard.

To obtain a more fine-grained comprehension of  {\sf NP}-hard problems, it is natural to consider \emph{parameterization}. In the parameterized complexity regime, people wonder whether there is any  $f(k)\cdot |G|^{O(1)}$ time algorithm (FPT algorithm) that can solve the  \ksetcover{} problem, where $f$ could be any computable function (e.g., $k!$ or $k^{k^k}$). 
As was shown in \cite{DowneyF95}, \ksetcover{} is the canonical {\sf W}[2]-complete problem, which means that, unless $\mathsf{W[2]} = \mathsf{FPT}$, \ksetcover{} does not admit any FPT algorithm. In parallel with the classical complexity regime, it is natural to ask:
\begin{center}
    \textit{Assuming $\mathsf{W[2]}\neq \mathsf{FPT}$, does {\sc \ksetcover{}} admit good FPT approximation algorithms?}
\end{center}
Note that since \ksetcover{} is a $\mathsf{W[2]}$-complete problem, the question above is (almost) equivalent to asking whether there is  an FPT self-reduction from exact \ksetcover{} to its gap version\footnote{For any $c>1$, on input a \ksetcover{} instance $(G,k)$, the goal of $c$-gap \ksetcover{} is to distinguish between the cases where $G$ has $k$-size solution and the cases where $G$ has no $ck$-size solution.}. 

In recent years, the parameterized inapproximability of \ksetcover{} has been established under various assumptions stronger than $\wtwo \ne \fpt$. Chen and Lin~ \cite{CL19} proved that it is $\mathsf{W[1]}$-hard to approximate \ksetcover{} within any constant ratio. 
Chalermsook et al. ~\cite{CCK+17} showed that assuming {\sf Gap-ETH}, \ksetcover{} cannot be approximated within a $(\log n)^{O(1/k)}$ factor in $n^{o(k)}$ time.  Using the \emph{Distributed PCP framework} \cite{abboud2017distributed}, Karthik, Laekhanukit and  Manurangsi \cite{KLM19} ruled out $(\log n)^{1/\text{poly}(k)}$  ratio  $n^{k-\varepsilon}$-time approximation algorithms for \ksetcover{} under {\sf SETH}. They also proved similar inapproximability results under  {\sf $k$-SUM Hypothesis}, {\sf ETH} and $\mathsf{W[1]} \neq \mathsf{FPT}$, respectively.  The hardness of approximation factors of these results were improved to $(\log n/\log\log n)^{1/k}$ in~\cite{Lin19} by combining \ksetcover{} instances with some gap-gadgets. Karthik and Navon~\cite{KL21} gave a simple construction of these gap-gadgets based on error correcting codes and named this technique \emph{Threshold Graph Composition}.

In contrast to the success of proving inapproximability of \ksetcover{} under stronger assumptions, 
the  FPT inapproximability of \ksetcover{}  under $\mathsf{W[2]} \neq \mathsf{FPT}$ remained completely open \cite{KLM19,KL21,feldmann2020survey}. Below we  summarize the difficulties encountered in the previous approaches.


\medskip\noindent\textbf{Barriers of previous approaches}. 
First, neither of the two pioneering works \cite{CCK+17, CL19} are applicable to this setting. In detail, the approach proposed in \cite{CL19} requires the reduction's starting point to admit a product structure (e.g. {\sc $k$-Clique}) in order to perform gap amplification. However, under the assumption $\mathsf{W[2]}\ne \mathsf{FPT}$, the starting point is the \ksetcover{} itself, whose product structure is notoriously hard to understand \cite{VizingConjecture}. On the other hand, the method of \cite{CCK+17} highly depends on the inherent gap in the {\sf Gap-ETH} assumption, thus could not be adapted to our setting.

Second, there are indeed some other advanced techniques, namely, the Distributed PCP Framework and the Threshold Graph Composition, which have been successfully used to prove the inapproximability of \ksetcover{} under $\mathsf{W[1]}\ne \mathsf{FPT}$, $\mathsf{ETH}$, and $\mathsf{SETH}$. However, under a weaker assumption $\mathsf{W[2]}\ne \mathsf{FPT}$, even constant FPT inapproximability of \ksetcover{} is difficult to reach using these two techniques. Generally speaking, in order to create a constant gap for \ksetcover{} under $\mathsf{W[2]}\ne \mathsf{FPT}$, both of these methods need to first establish the $\mathsf{W[2]}$-hardness of \ksetcover{} with a small universe set. Nevertheless, establishing such a hardness result is highly non-trivial, and we believe it to be unrealizable\cite{BA17}. 

Below, we illustrate their technique barriers in detail, respectively.

\begin{itemize}
    \item The work \cite{Lin19} proposes a gap-producing self-reduction for \ksetcover{}, i.e., it constructs a reduction from \ksetcover{} to $c$-gap \ksetcover{}. However, to create the constant gap $c$, the running time of this reduction will be $|U|^{(ck)^k}$, where $|U|$ represents the universe size for the input \ksetcover{} instance. Hence, to apply this method under our setting $\mathsf{W[2]}\ne \mathsf{FPT}$, we need to first prove that \ksetcover{} remains $\mathsf{W}[2]$-hard even when $|U|=n^{O(1/{(ck)}^k)}$.
    \item The approach of~\cite{KLM19} is more complicated. It starts with a $(1-1/k)$-gap $k$-{\sc MaxCover}\footnote{In the $k$-{\sc MaxCover} problem, we are given a bipartite graph $I=(V,U,E)$ with $|V|=k$ and $|U|=q$, two alphabet $\Sigma_V$ and $\Sigma_U$ and constraints $C_{e}\subseteq \Sigma_V\times\Sigma_U$ for each $e\in E$. The goal is to find assignments $\sigma_V : V\to\Sigma_V$ and $\sigma_U : U\to\Sigma_U$ to maximize the number of $u\in U$ whose edges's constraints are all satisfied (Such vertex $u$ are called \emph{covered} by assignments $\sigma_V$ and $\sigma_U$). For $\delta\in(0,1)$, the $\delta$-gap $k$-{\sc MaxCover}  is to distinguish between the cases where all vertices in $U$ can be covered and the cases where at most $\delta$-fraction of $U$ can be covered.} instance which has $q$ right variables  and  right alphabet $\Sigma_U$. It then uses the product method to amplify its gap to $(1-1/k)^t$ with the price of increasing $q$ to $q^t$ and $\Sigma_U$ to $\Sigma_U^t$. Finally it transfers a $(1-1/k)^t$-gap $k$-Max-Cover instance with alphabet $\Sigma_U^t$ and variable $q^t$ to a  $(1-1/k)^{-t/k}$-gap \ksetcover{} instance with size at least $q^t\cdot k^{|\Sigma_U|^t}$. It is not hard to see that to obtain a constant gap for \ksetcover{}, $t$ should be at least $\Omega(k^2)$, which means that $q$ should be at most $n^{O(1/k^2)}$. However, as far as we known, there is no FPT-reduction from \ksetcover{} to $(1-1/k)$-gap $k$-{\sc MaxCover} with $q$ at most $n^{O(1/k^2)}$.
    We note that there is a simple reduction from \ksetcover{} to $k$-{\sc MaxCover} with $q=|U|$ and $\Sigma_U=[k]$. Such a reduction is far from satisfactory because the $k$-{\sc MaxCover} instance it produces does not have a $(1-1/k)$-gap and it requires the $\mathsf{W[2]}$-hardness of \ksetcover{} with $|U|=n^{O(1/k^2)}$.
\end{itemize}

Note that establishing such a hardness result is equivalent to making an FPT self-reduction that ``compresses'' the \ksetcover{} problem, i.e., the reduction starts with a \ksetcover{} instance with universe size $n$ and results in another equivalent instance of \ksetcover{} with significantly smaller universe size $n^{o(1)}$.  
Unfortunately,  there is no known FPT-reduction that achieves this, and we list two partial results below, which suggest that such a reduction might be unrealizable.
\begin{itemize}
\item First, we  prove that (See Lemma \ref{lem:setcoverw1} in the Appendix), unless $\mathsf{W[2]} = \mathsf{W[1]}$, there is no  FPT self-reduction for \ksetcover{} that can compress the size of universe set to $f(k)\cdot \log n$ for any computable function $f$. 
\item In addition, research from the lower bound of kernelization~\cite{BA17} shows that: unless $\mathsf{NP}\subseteq \mathsf{coNP/poly}$ and the polynomial hierarchy $\mathsf{PH}$ collapses, for every $\varepsilon>0$, there is no polynomial self-reduction for non-parameterized \ksetcover{} that compress the  size of instance to $n^{2-\varepsilon}$.
This also suggests that the corresponding parameterized version of \ksetcover{} may also be hard to compress. 
\end{itemize}

\medskip\noindent\textbf{Our results}. We bypass these technique barriers by coming up with a strong version of the threshold graph and a new composition method that can create a constant gap for \ksetcover{} instance with a large universe set. Using this technique, we give the first inapproximability result for \ksetcover{} under $\mathsf{W[2]} \neq \mathsf{FPT}$. In summary, the main contribution of our paper is:
\begin{theorem}
\label{w2hardness}
    Assuming  $\mathsf{W[2]} \neq \mathsf{FPT}$, there is no  FPT algorithm which can approximate {\sc\ksetcover{}} within any constant ratio.
\end{theorem}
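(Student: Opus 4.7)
The plan is to give an FPT reduction from exact $k$-SetCover (which is $\wtwo$-complete) to $c$-gap $k'$-SetCover for every constant $c>1$; the theorem then follows immediately. The reduction rests on three ingredients: a \emph{strong threshold graph}, a \emph{composition} that glues this graph to an arbitrary SetCover instance, and a \emph{parameter choice} that drives the gap to any desired constant while keeping $k'$ a computable function of $k$.

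First, I would define a strong threshold graph to be a bipartite graph $H=(A,B,E)$ in which $A$ carries a product structure, say $A=A_1\times\cdots\times A_h$, together with two properties. Completeness: for every ``consistent row'' $(a_1,\ldots,a_h)$, the vertices $a_1,\ldots,a_h$ cover all of $B$. Strong soundness: for every $T\subseteq A$ with $|T|\le k'$ that is not contained in any consistent row, $T$ misses at least a $(1-1/c)$-fraction of $B$. The strengthening over prior versions in \cite{Lin19,KL21} is that soundness is required against \emph{every} small unstructured $T$, not only against those avoiding a single planted solution. I would construct $H$ by explicit combinatorial means---using error-correcting codes or set systems with sharp covering thresholds---rather than by invoking the PCP theorem.

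Next, given a $k$-SetCover instance $(G,k)$ with sets $\mathcal{S}$ and universe $U$, the composition would build a new instance whose universe is essentially $U\times B$ and whose sets correspond to pairs $(\mathbf{s},a)$ with $\mathbf{s}$ a tuple drawn from $\mathcal{S}$ and $a\in A$. Incidences are arranged so that a valid $k$-cover of $G$ combined with a canonical consistent row of $H$ induces a cover of the new instance of size $k'$, while any cover of size $<ck'$ must---by the strong threshold property of $H$---either collapse onto a consistent row (and thereby expose a $k$-cover of $G$) or leave a constant fraction of $U\times B$ uncovered. Boosting the gap to any constant follows by tuning the parameters of $H$, or by iterating the composition a constant number of times.

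The main obstacle will be the soundness analysis. Previous approaches needed an intermediate reduction that compressed $|U|$ to $n^{o(1)}$ because their soundness accounting paid a $\mathrm{poly}(|U|)$ overhead that erased any constant gap; as discussed above, no such compression is available under $\wtwo\neq\fpt$. The new composition must therefore be crafted so that the strong soundness of $H$ bites \emph{uniformly} across all $|U|$ element-copies, keeping the surviving uncovered mass bounded below by an absolute constant fraction independent of $|U|$. Getting this quantitative coupling right---between the combinatorial parameters of $H$ and the coverage accounting over the $U$-coordinate---is precisely what the ``strong'' part of the strong threshold graph is meant to secure, and it will be the technical heart of the proof.
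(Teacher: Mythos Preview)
Your plan has the right shape---threshold graph plus composition---but both ingredients, as you describe them, are the old ones that the paper explains do \emph{not} work under $\wtwo\neq\fpt$, and your proposal does not supply the new mechanism that makes the reduction go through.

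First, the threshold graph. You ask for a bipartite $H=(A,B)$ where consistent $h$-tuples \emph{cover all of $B$} and every small unstructured $T\subseteq A$ covers at most a $1/c$-fraction of $B$. That is a disperser-type condition, and it is not what the paper uses (nor is it clear it can be built with the needed parameters). The paper's threshold graph has $B=B_1\dot\cup\cdots\dot\cup B_m$; completeness says a $k$-tuple $(a_1,\ldots,a_k)\in A_1\times\cdots\times A_k$ has a \emph{common neighbor} in each $B_j$ (it does not cover $B_j$); and strong soundness says that if one can pick $b_j\in B_j$ for an $\varepsilon$-fraction of indices $j$ such that each $b_j$ has $\ge k+1$ neighbors in $X\subseteq A$, then $|X|>h$. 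This is the property that falls out of Reed--Solomon codes, and it is qualitatively different from yours.

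Second, and more seriously, your composition---universe $U\times B$, sets indexed by pairs $(\mathbf{s},a)$---is essentially the composition of \cite{Lin19}, which the paper spells out requires $|U|^{(ck)^k}$ time and hence a compression of $|U|$ that is unavailable under $\wtwo\neq\fpt$. You name this barrier but then only assert that ``strong soundness bites uniformly''; that is not a mechanism. The paper's actual fix is to change the composition: put $S'=A\dot\cup B$ (so $B$-vertices become \emph{sets} in the new instance), give $A$-vertices weight $m/k$ and $B$-vertices weight $1$, and take universe $\bigcup_{i\in[m]} U^c\times B_i^c$. In the YES case one picks $a_1,\ldots,a_k$ together with one common neighbor $b_i$ per block, total weight $2m$. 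In the NO case, for each $i$ either one spends $c{+}1$ unit-weight vertices in $B_i$, or some $b_j\in B_i$ has $\ge k{+}1$ neighbors in the chosen $A$-part; the strong threshold property then forces either $(1-\varepsilon)cm$ weight in $B$ or more than $h\cdot m/k$ weight in $A$. The $U^c$-blowup is only $|U|^{O(c)}$, polynomial for constant $c$, which is precisely how the compression barrier is bypassed. Your proposal contains none of these ideas (the $B$-as-sets trick, the weighting, the $U^c\times B_i^c$ universe), and without them the soundness argument you sketch---``collapse onto a consistent row or leave a constant fraction uncovered''---fails: in the NO case an adversary can take a $(k{+}1)$-cover of $U$ paired with suitable $A$-vertices and cover everything cheaply.
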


As a further application, our technique can also be used to rule out polynomial time $o\left(\frac{\log n}{\log\log n}\right)$ ratio approximation algorithms for  \ksetcover{} with $k$  as small as $O(\log^3 n)$ under  $\mathsf{W[1]} \neq \mathsf{FPT}$. 
\begin{theorem}
\label{w1hardness}
Assuming $\mathsf{W[1]} \neq \mathsf{FPT}$, there is no polynomial time algorithm which can approximate \ksetcover{} within $o\left(\frac{\log n}{\log\log n}\right)$ ratio, even if $k$ is as small as $O\left(\frac{\log n}{\log \log n}\right)^3$.
\end{theorem}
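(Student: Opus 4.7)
The plan is to reuse the threshold graph composition that underlies Theorem~\ref{w2hardness}, but to instantiate it at parameters tuned for polynomial running time and a polylogarithmic parameter $k$. Recall that the composition takes as input a \ksetcover{} instance together with a strong threshold object of gap $g$ (built from an error correcting code in the style of \cite{KL21}), and produces a new \ksetcover{} instance whose yes/no gap is $\Omega(g)$ and whose parameter grows polynomially in the source parameter and in $g$. Theorem~\ref{w2hardness} invokes this composition with $g$ any sufficiently large constant; for Theorem~\ref{w1hardness} I would instead invoke it with $g=\Theta(\log n/\log\log n)$, which is the largest gap one can force while keeping the threshold gadget of polynomial size.

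Next I would choose the source \ksetcover{} instance to start from the canonical $\wone$-hard instance arising from $k$-\textsc{Clique}, with source parameter $k_0=\Theta(\log n/\log\log n)$. Feeding this instance into the composition with $g=\Theta(\log n/\log\log n)$ produces a new \ksetcover{} instance of size polynomial in $n$ and with final parameter $k=O(k_0^{\alpha}g^{\beta})=O\bigl((\log n/\log\log n)^3\bigr)$, where $\alpha,\beta$ are the polynomial blow-up exponents inherent to the composition and the explicit exponent $3$ is forced by the product structure of the construction. The resulting instance stays polynomial-sized because explicit algebraic code families yield threshold objects of polynomial size even at super-constant gap.

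Putting this together, any polynomial-time approximation algorithm with ratio $o(\log n/\log\log n)$ on this family of instances would distinguish the yes and no cases in polynomial time, and therefore decide the $\wone$-hard starting \ksetcover{} instance in polynomial time, contradicting $\wone\ne\fpt$.

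The main obstacle, as in the proof of Theorem~\ref{w2hardness}, is certifying that the strong threshold condition driving our composition persists at the larger gap $g=\Theta(\log n/\log\log n)$ while the gadget size and the final parameter stay within the stated polylogarithmic bounds. Balancing the threshold object's alphabet, dimension and gap so that the final parameter lands at exactly $O((\log n/\log\log n)^3)$, and no larger, is the technical heart of the argument; once this is in place, completeness, polynomial-time bookkeeping, and the contradiction with $\wone\ne\fpt$ go through by the same analysis as in Theorem~\ref{w2hardness}, with only the parameters rescaled.
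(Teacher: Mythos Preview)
Your proposal has the overall shape right but misidentifies the key technical points, and one claim would actually break the argument.

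First, the source parameter $k_0$ is \emph{not} $\Theta(\log n/\log\log n)$: it is the original $k$-\textsc{Clique} parameter, an arbitrary constant independent of $n$. If you tie $k_0$ to $n$ as you wrote, a polynomial-time algorithm on the output instance no longer yields an FPT algorithm for $k$-\textsc{Clique}, and the contradiction with $\wone\ne\fpt$ evaporates. The final parameter $2m=2(\log n/\log\log n)^3$ is not of the form $k_0^{\alpha}g^{\beta}$; it comes purely from the choice of $m$ in the threshold graph, and the exponent~$3$ arises because one needs $m\ge|\Sigma|$ for the Reed--Solomon code together with $|\Sigma|^r\ge n$ and $r=\log n/\log\log n$.

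Second, and more seriously, you omit the one ingredient that makes a super-constant gap feasible here but not in Theorem~\ref{w2hardness}: the \ksetcover{} instance produced from $k$-\textsc{Clique} via Lemma~\ref{lemma:smallu} has a \emph{small universe} $|U|=O(k^2\log n)$. The composition of Theorem~\ref{thm:reduction} runs in time $(|U|\cdot t)^{O(c)}$, so to take $c=\Theta\bigl(\log n/(k\log\log n)\bigr)$ and remain polynomial one needs $|U|\cdot t$ to be polylogarithmic in $n$. This is what forces both the small-universe reduction and a threshold graph with polylogarithmic alphabet $|\Sigma|=(\log n/\log\log n)^3$ (hence polylogarithmic $t=|\Sigma|^k$), in contrast to the $\wtwo$ proof where $|\Sigma|=n^{1/k}$. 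Your sentence that ``explicit algebraic code families yield threshold objects of polynomial size'' addresses only the size of the threshold graph~$T$, not the $(|U|t)^{O(c)}$ blow-up inside the composition, which is the actual bottleneck. Without the small-universe step the reduction would take time $n^{\Theta(c)}$, which is super-polynomial for your choice of~$c$.
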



We emphasize that our Theorem \ref{w1hardness} is non-trivial since it overcomes limitations of previous works in two aspects.



\begin{itemize}
    \item \emph{Limitations of results in parameterized complexity.} First, all the previous results~\cite{CL19,CCK+17,KLM19,Lin19} on the parameterized inapproximability of \ksetcover{} only ruled out \\ $o\left((\log n)^{1/k}\right)$ approximation ratios, which was further pointed out as a barrier\cite{KL21}. However, the folklore polynomial time greedy algorithm could approximate \ksetcover{} with approximation factor  $\Theta(\log n)$. There is a huge gap between them. Our Theorem \ref{w1hardness} is the first one to obtain inapproximability result for \ksetcover{} with a small solution size that can bypass the $(\log n)^{1/k}$ barrier. 
    \item \emph{Limitations of the PCP theorem.} Second, the major technique used to rule out polynomial approximation algorithm for \ksetcover{} is the well-celebrated PCP theorem\cite{feige1998threshold, alon2006algorithmic}. However, as pointed out in \cite{CL19},  results of this type are unlikely provable using the classic PCP machinery since the PCP theorem always produces \ksetcover{} instances with a large solution size $\Omega(n)$. However, our Theorem \ref{w1hardness} could rule out polynoimal time approximation with a small solution size $O\left(\frac{\log n}{\log \log n}\right)^3$. Furthermore, assuming the mild assumption $\mathsf{NP}\nsubseteq \mathrm{TIME}(2^{\mathrm{poly}\log n})$,  \ksetcover{} with solution size $\log^{O(1)} n$ is not even $\mathsf{NP}$-hard, thus we could not derive hardness for \ksetcover{} with $\mathrm{poly}\log n$ solution size from gap SAT (or the PCP theorem).
\end{itemize}


Finally, as a suggestion to future work, we remark that: if one could improve the polynomial time lower bound in Theorem~\ref{w1hardness} to $n^{k-o(1)}$ under $\mathsf{SETH}$, then  the ratio $o\left(\frac{\log n}{\log\log n}\right)$ is tight.
The reason is that: there is a simple algorithm with running time $n^{k-\Omega(1)}$ that approximates \ksetcover{} with approximation factor $O\left(\frac{\log |U|}{\log k}\right)$. Under our setting, we have that $k=\log^{\Theta(1)} n$, hence, the algorithm could approximate \ksetcover{} with factor $O\left(\frac{\log |U|}{\log k}\right)=O\left(\frac{\log n}{\log\log n}\right)$ in $n^{k-\Omega(1)}$ time. The algorithm simply modifies the well-known greedy algorithm for \ksetcover{}, and is illustrated detailedly as follows.
\begin{itemize}
    \item[]\textit{Details of the algorithm.} Fix some integer $2\le T< k$, our algorithm  repeatedly picks $k-T$ vertices into the final solution in  $O(n^{k-T+1})$ time, such that these vertices cover the largest fraction of remaining elements in the universe. Repeat the procedure above for $r$ times with $r$ to be determined, we could obtain a solution of size $r \cdot (k-T)$. We analyze the correctness and the approximation ratio of our algorithm below. 
    \item[]\textit{Analysis.} In each repetition, we will find $k-T$ vertices that cover at least $1-T/k$ fraction of remaining elements in the universe. Thus after $r$ repetitions, the number of the uncovered elements will be no more than $(T/k)^r\cdot |U|$. To ensure that we have covered all elements in the universe, we need to choose the number of repetitions $r = \frac{\log |U|}{\log k - \log T}$ by solving the inequality for uncovered elements $(T/k)^r\cdot |U|< 1$.
\end{itemize}
Note that proving tight lower bounds and upper bounds for approximation algorithms of \ksetcover{} for different choices of $k$ and approximation ratios is still a big challenge. We would like to mention that our proof might shed a new light on the study of this problem since it is an elementary proof, which circumvents the complex PCP machinery and only involves simple combinatorial objects such as error correcting codes.








\subsection{Our Techniques}
\label{subsec:ourtech}
Our main technique is the \emph{threshold graph composition}, which was first used in \cite{Lin18}, and has been applied to create gaps for many parameterized problems \cite{CL19, Lin19, BBE+21, KL21}. At a very high level,  this method first constructs a  graph with some threshold properties and then combines this graph with the input instance to produce a gap instance of the desired problem. 

In this paper, we introduce a strong variant of the threshold graph and propose a new way to compose this threshold graph with the original \ksetcover{ }instance.  
Below we firstly overview the proof in the work\cite{Lin19},  and then illustrate our proof for Theorem \ref{w2hardness}
 briefly. Theorem \ref{w1hardness} could be analogously proved by simply choosing  another combination of parameters.

\medskip

\noindent\textbf{The Original Proof in \cite{Lin19}.} The threshold graph appeared in~\cite{Lin19} are 
bipartite graphs $T=(A\dot\cup B,E_T)$  with the following properties:
\begin{description}
    \item[(i)] $A=A_1\dot\cup A_2\dot\cup\cdots\dot\cup A_k$.
    \item[(ii)] $B=B_1\dot\cup B_2\dot\cup\cdots\dot\cup B_m$.
    \item[(iii)] For any $a_1\in A_1,\ldots,a_k\in A_k$ and $i\in [m]$, $a_1,\ldots,a_k$ have a common neighbor in $B_i$.
    \item[(iv)] For any $X\subseteq A$ and $b_1\in B_1,\ldots,b_m\in B_m$, if every $b_i$ has $k+1$ neighbors in $X$, then $|X|>h$.
\end{description}

Given a set cover instance $\Gamma=(S \dot\cup U,E)$ and a threshold graph $T=(A \dot\cup B,E_T)$ with $A=A_1\dot\cup A_2,\ldots,A_k$, $B=B_1\dot\cup,\cdots,\dot\cup B_m$ and $|A_i|=|S|$, the reduction in~\cite{Lin19} treats each $A_i$ as a copy of $S$ and creates a set cover instance $\Gamma_i=(A,U^{B_i},E_i)$ for every $i\in[m]$ so that in order to cover $U^{B_i}$, one has to pick $\ell$ notes $\{a_1,\ldots,a_\ell\}$ from $A$ satisfying the following two conditions:
\begin{description}
    \item[(a)] $a_1,\ldots,a_\ell$ cover $U$ in the instance $\Gamma$,
    \item[(b)] $a_1,\ldots,a_\ell$ have a common neighbor in $B_i$ in the threshold graph $T$.
\end{description}
After that, it takes the union of all $\Gamma_i=(A,U^{B_i},E_i)$ for $i\in[m]$. It is not hard to see that if the input instance $\Gamma$ has a solution of size $k$, then there exist $a_1\in A_1,\ldots,a_k\in A_k$ which cover all vertex in $U$. By (a), (b) and the property (iii), these vertices $a_1,\ldots,a_k$ also cover  $U^{B_i}$ for all $i\in[m]$.

On the other hand, if the input instance $\Gamma$ has no solution of size $k$,
below we show that every solution $X$ of the output instance must have a size larger than $h$.
To cover the part $\Gamma_i$ in the output instance, any solution $X$ has to incorporate $k+1$ vertices in to cover $U$ to satisfy condition (a). Furthermore, these vertices share a common neighbor $b_i\in B_i$ to satisfy condition (b). This means that we could find a vertex $b_i\in B_i$ with $k+1$ neighbors in our solution $X$. Since this works for every $i\in [m]$, we conclude by the threshold property (iv) that $|X|> h$.

The reduction above fails to prove constant inapproximability of {\sc \ksetcover{}} under $\mathsf{W[2]} \neq \mathsf{FPT}$ because to get a constant hardness factor, one needs to set $h=c\cdot k$ for some constant $c$. Unfortunately, the current construction of threshold graph has $|B_i|=h^k$.  Thus the reduction has running time at least $|U|^{{h}^k}$, which is not FPT when $|U|=\Omega(n)$. Previous reductions~\cite{Lin19,KLM19} use the fact that under stronger hypotheses, {\sc \ksetcover{}} remains $\mathsf{W[1]}$-hard even when $|U|=k^{O(1)}\log n$. However, unless $\mathsf{W[2]}=\mathsf{W[1]}$, one cannot prove {\sc \ksetcover{}} remains $\mathsf{W[2]}$-hard when $|U|=k^{O(1)}\log n$ (See Lemma~\ref{lem:setcoverw1} in the Appendix). Jansen and Pieterse~\cite{BA17} proved that unless $\mathsf{NP}\subseteq \mathsf{coNP/poly}$, there is no $O(n^{2-\epsilon})$-size compression for general non-parameterized \ksetcover{}, which suggests it might not be possible to prove \ksetcover{} with small $|U|$ is $\mathsf{W[2]}$-hard. 

\medskip

\noindent\textbf{Our Reduction.} Below we illustrate our reduction at a high level. We firstly define a stronger version of the threshold graph by
replacing (iv) with a  property  reminiscent of the soundness condition of \emph{multi-assignment} PCP \cite{alon2006algorithmic}.
\begin{description}
    \item[(iv')] For any $X\subseteq A$ and $I\subseteq [m]$ with $|I|\ge\varepsilon m$, if for every $i\in I$,  $b_i$ has $k+1$ neighbors in $X$, then $|X|>h$.
\end{description}
It turns out that such strong threshold graphs can be constructed using Error Correcting Codes, which was proposed by Karthik  and  Navon \cite{KL21}. 

Then we define a new threshold graph composition which replaces $\Gamma_i$ by $\Gamma_i'=(A\cup B_i,U^{c}\times B_i^{c},E_i')$. By choosing the edge set $E_i'$ carefully, we can guarantee that in order to cover $U^{c}\times B_i^{c}$, one has to
\begin{description}
    \item[(a)] either pick $c+1$ vertices from $B_i$,
    \item[(b)] or pick $a_1,\ldots,a_\ell$ from $A$ and $b_i\in B_i$ such that
\begin{description}
    \item[(b.1)] $a_1,\ldots,a_\ell$ cover $U$ in the instance $\Gamma$,
    \item[(b.2)] $b_i$ is a common neighbor of $a_1,\ldots,a_\ell$  in the threshold graph.
\end{description}
\end{description}

Suppose the input instance $\Gamma$ has a solution of size $k$. Since each $A_i$ is a copy of the set $S$ in $\Gamma$, let $a_1\in A_1,\ldots,a_k\in A_k$ be the vertices that can cover $U$. By the property of threshold graph,  $a_1,\ldots,a_k$ have a common neighbor $b_i\in B_i$ for every $i\in[m]$. The vertices $a_1,\ldots,a_k$ and $b_1,\ldots,b_m$ together can cover $U^{c}\times B_i^{c}$ for all $i\in[m]$.

On the other hand, if $\Gamma$ has no $k$-size solution, then in order to cover every $U^{c}\times B_i^{c}$, one should pick either $(1-\varepsilon)(c+1)m$ vertices in $B$ or pick $X\subseteq A$ and $\epsilon m$ vertices from $B$ such that each vertex has $k+1$ neighbors in $X$. By the property (iv'), the later implies that $|X|>h$. Thus, either  $(1-\varepsilon)(c+1)m$ vertices in $B$  or $h$ vertices in $A$ must be picked in this case. To obtain a constant gap, we assign $m/k$ weight to every vertex in $A$ and let $h=k^2$, $m=k^5$. It is routine to check that the reduction is FPT.

We remark that $|B_i|$ can be sufficiently small in the threshold graph construction. Assuming $\mathsf{W[1]}\neq \mathsf{FPT}$, $|U|$ is $k^{O(1)}\log n$ and thus we can set $c=\frac{\log n}{k \log \log n}$. Since $k$ is arbitrarily small with respect to $n$, by choosing appropriate parameters, we can rule out $o\left(\frac{\log n}{\log \log n}\right)$ factor polynomial time algorithms for the non-parameterized \ksetcover{} problem.

\bigskip

\section{Preliminaries}
\label{sec:prelims}

In this section, we formally define the \ksetcover{} problem and  related hypothesis.

For every graph $G$, we use $V(G)$ and $E(G)$ to denote its vertex set and edge set. For every vertex $v\in V(G)$, let $ N(v)\subseteq V(G)$ be the set of neighbors of $v$ in $G$. For every vertex set $C\subseteq V(G)$, define $ N(C)=\bigcup_{v\in C} N(v)$. 

The (weighted variant of) \setcover{} is defined as follows. An instance $\Gamma$ consists of a bipartite graph $G=(S,U,E)$ and a weight function $w: S\to \mathbb{N}^+$. The goal is to find a set $C\subseteq S$ such that $ N(C) = U$ and the total weight $\sum_{s\in C} w(s)$ is minimal. We use ${\sf OPT}(\Gamma)$ to represent the minimum total weight. For every $\chi\in\mathbb{N}$, we say an instance $\Gamma$ is \emph{$\chi$-weighted} if the number of different weights is upper bounded by $\chi$, i.e. $|\{w(v) : v\in S\}|\le\chi$. The unweighted \setcover{} is equivalent to $1$-weighted \setcover{}.

Given a constant $\chi$, by duplicating every vertex according to its weight, any $\chi$-weighted \setcover{} instance $\Gamma$ can be reduced to an unweighted ($1$-weighted) \setcover{} instance $\Gamma'$ in time $\left(|\Gamma| \cdot \max\{w(s): s\in S\}\right)^{O(1)}$, while preserving the optimum. 

\begin{lemma}[Lemma 16 in \cite{CL19}]
\label{lem:remove-weights}
    For any constant $\chi$, there is a reduction which, given any $\chi$-weighted \setcover{} instance $\Gamma = (S, U, E, w)$, outputs an unweighted \setcover{} instance $\Gamma' = (S', U', E')$ in \\$O(\left(|\Gamma|\max\{w(s): s\in S\} \right)^{O(1)})$ time, such that for every $k<|\Gamma|$, ${\sf OPT}(\Gamma)\le k$ if and only if ${\sf OPT}(\Gamma')\le k$.
\end{lemma}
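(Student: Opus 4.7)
The plan is a replication-plus-gadget reduction. For each $s \in S$ with $w(s) = w$, I would introduce $w$ unweighted copies $s^1, \ldots, s^w$ in $S'$, each covering the same set $N(s) \subseteq U \subseteq U'$. Under this duplication, any weighted cover $C$ of $\Gamma$ of weight at most $k$ already lifts to the unweighted cover $\bigcup_{s \in C}\{s^1, \ldots, s^{w(s)}\}$ of $\Gamma'$ with size $\sum_{s \in C} w(s) \le k$, which handles the forward direction of the biconditional immediately.

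The harder direction requires ruling out unweighted covers of $\Gamma'$ that pick only a proper nonempty subset of the copies of some $s$, since projecting such a cover back to $\Gamma$ would yield a weighted cover of weight strictly greater than the unweighted size. To enforce the invariant that every minimum cover picks copies of each $s$ in an all-or-nothing fashion, I would attach to each $s$ a private gadget, consisting of auxiliary universe elements in $U'$ (and possibly a few auxiliary vertices in $S'$), wired so that if any copy $s^i$ is picked then covering the gadget forces the remaining $w(s)-1$ copies to be picked, while if no copy is picked the gadget can be satisfied at no extra cost relative to a fixed baseline. Given this invariant, the backward direction is routine: project an unweighted cover $C'$ of $\Gamma'$ onto $C = \{s : s^i \in C' \text{ for some } i\} \subseteq S$; all-or-nothing gives $|C'| = \sum_{s \in C} w(s)$, so $C$ is a weighted cover of $\Gamma$ of weight at most $k$. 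The $\chi = O(1)$ hypothesis keeps the design tractable: only finitely many weight values need to be handled, and a gadget tailored per weight class keeps the total size at $(|\Gamma| \cdot \max_s w(s))^{O(1)}$, as required.

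The main obstacle I anticipate is designing this gadget so that it simultaneously (i) forces full selection of all copies of $s$ once any copy is chosen, and (ii) imposes no mandatory cost on vertices $s$ that are unused in the optimal cover. A naive ``one private element per copy, covered only by that copy'' gadget achieves (i) but violates (ii), inflating the optimum by $\sum_s w(s)$ unconditionally; adding a shared ``skip'' vertex that covers all private elements of $s$ restores (ii) but destroys (i), since a single skip can substitute for the $w(s)$ copies. A correct gadget must break this symmetry, for instance by coupling the gadget elements with coverage of $N(s)$ so that a skip vertex never substitutes for a copy selection on the covering side. Striking this balance is the technical content of Lemma~16 in \cite{CL19}, whose construction my plan would invoke directly.
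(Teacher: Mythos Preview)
The paper does not prove this lemma; it cites \cite{CL19} with only the one-line gloss ``by duplicating every vertex according to its weight,'' and your proposal ultimately defers to the same citation. At that level you and the paper agree. You are also right that naive duplication---giving all $w(s)$ copies of $s$ the identical neighborhood $N(s)$---fails the backward direction, since one copy already yields the full coverage benefit.

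Where your plan goes off track is the proposed repair. A per-vertex gadget built from \emph{private} auxiliary elements cannot work, and you essentially diagnose why yourself: those private elements must be covered even when $s$ is unused, so either a ``skip'' vertex exists (and then one copy of $s$ plus the skip defeats all-or-nothing) or it does not (and then every $s\in S$ incurs mandatory cost, shifting the optimum by $\Theta(|S|)$ and destroying the exact equality the lemma asserts). There is no gadget local to a single $s$ that threads this needle. The actual construction in \cite{CL19} adds no private elements at all. With distinct weight values $w_1,\ldots,w_\chi$ one blows up the universe to $U'=U\times[w_1]\times\cdots\times[w_\chi]$, and copy $s^\ell$ of a vertex $s$ in weight class $j$ covers only the hyperplane $\{(u,i_1,\ldots,i_\chi):u\in N(s),\ i_j=\ell\}$. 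The backward direction is then an \emph{averaging} argument rather than an all-or-nothing one: for any unweighted cover $C'$ of $U'$, each tuple $\vec{\imath}=(i_1,\ldots,i_\chi)$ induces a cover $C_{\vec{\imath}}=\{s: s^{i_j}\in C',\ j\text{ the weight class of }s\}$ of $U$, and the expected weighted cost of $C_{\vec{\imath}}$ over a uniformly random $\vec{\imath}$ is exactly $|C'|$; hence some tuple witnesses ${\sf OPT}(\Gamma)\le |C'|$. The hypothesis $\chi=O(1)$ enters only to keep $|U'|\le |U|\cdot(\max_s w(s))^{\chi}$ polynomial, which is how the stated running time arises.
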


In parameterized complexity theory, we consider problems $L\subseteq\{0,1\}^*$ with a computable function $\kappa : \{0,1\}^*\to\mathbb{N}$ which returns a parameter $\kappa(x)\in\mathbb{N}$ for every input instance $x\in \{0,1\}^*$. Since we mainly focus on graph related problems, it is convenient to treat each input as a pair $x=(G,k)$ with $G$ a graph and $k$ an integer and let $\kappa(x)=\kappa(G,k)=k$.
A parameterized problem  $(L,\kappa)$ is \emph{fixed parameter tractable} (FPT) if it has an  algorithm  which for every input $x\in\{0,1\}^*$ decides if $x$ is a yes-instance of $L$ in $f(\kappa(x))\cdot |x|^{O(1)}$-time for some computable function $f$. An FPT-reduction from problem $(L,\kappa)$ to $(L',\kappa')$ is an algorithm $A$ which on every input $x$, outputs an instance $x'$ in $f(\kappa(x))\cdot |x|^{O(1)}$-time such that $x$ is a  yes-instance if and only if $x'$ is a yes-instance  and $\kappa'(x')=g(\kappa(x))$ for some computable function $f$ and $g$. 
In the parameterized version of \setcover{}, i.e., \ksetcover{}, we are given  an instance $\Gamma$ of unweighted  \setcover{} and an integer $k$ as its parameter(i.e. $\kappa(\Gamma,k)=k$). The goal is to decide if ${\sf OPT}(\Gamma)\le k$. 
The $c$-gap \ksetcover{} problem is to distinguish between the cases ${\sf OPT}(\Gamma) \le k$ and ${\sf OPT}(\Gamma)>ck$. Another fundamental parameterized problem is the  $k$-{\sc Clique} problem, whose goal is to decide whether an input graph $G$ contains a clique of size $k$.

At the end, we present the hypothesis $\wtwo\ne\fpt$ and $\wone \ne \fpt$ on which our results based. For simplicity, we omit the definition of W-Hierarchy and only use an equivalent form.

\begin{hypothesis}[$\wtwo\ne\fpt$]
\ksetcover{} cannot be solved in $f(k)\cdot n^{O(1)}$ time for any computable function $f$.
\end{hypothesis}

\begin{hypothesis}[$\wone\ne\fpt$]
 $k$-{\sc Clique} cannot be solved in $f(k)\cdot n^{O(1)}$ time for any computable function $f$.
\end{hypothesis}

Analogously  to the definition of  $\mathsf{NP}$-hardness, we say a problem is   $\mathsf{W[1]}$-hard or  $\mathsf{W[2]}$-hard if there is an FPT-reduction from $k$-{\sc Clique} or \ksetcover{} to it, respectively.
Similarly, a problem is  in  $\mathsf{W[1]}$ or  $\mathsf{W[2]}$ if there is an FPT-reduction from it to $k$-{\sc Clique} or \ksetcover{}, respectively. Note that $\wone\ne\fpt$ implies $\wtwo\ne\fpt$, because there is an FPT reduction which transforms a $k$-{\sc Clique} instance to an unweighted \ksetcover{} instance $\Gamma$ with small universe size $|U| = k^3\cdot \log |\Gamma|$  \cite{KLM19,Lin19}. Thus, in our paper, the starting point of our reduction for Theorem~\ref{w1hardness} is an unweighted \ksetcover{} with small universe size rather than a $k$-{\sc Clique} instance. Formally, the reduction \cite{KLM19,Lin19} is stated below.

\begin{lemma}
\label{lemma:smallu}
    There is a polynomial time reduction which, given an $n$-vertex graph $G = (V, E)$ and a parameter $k$, outputs a \ksetcover{} instance $\Gamma = (S, U, E)$ where $|U| = O(k^3\log n)$ and $|S| = |E| = O(n^2)$, such that:
    \begin{itemize}
        \item if $G$ contains a $k$-clique, then $\mathsf{OPT}(\Gamma)\le \binom{k}{2}$;
        \item if $G$ contains no $k$-clique, then $\mathsf{OPT}(\Gamma) > \binom{k}{2}$.
    \end{itemize}
\end{lemma}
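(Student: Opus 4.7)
The plan is to build a \ksetcover{} instance $\Gamma$ whose sets are in bijection with the edges of $G$ (yielding $|S|=|E(G)|=O(n^2)$) and whose universe $U$, of size $O(k^3\log n)$, is crafted so that a solution of size $\binom{k}{2}$ exists iff $G$ contains a $k$-clique. This follows the template used in \cite{KLM19,Lin19}: the main idea is to use a small hash family to encode the clique's slot structure compactly.

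I would fix a family $\mathcal{H}=\{h_1,\dots,h_t\}$ of $t=O(\log n)$ functions $h_\ell:V\to[q]$ with $q=\mathrm{poly}(k)$, enjoying two properties: (i) pair separation, i.e., any two distinct vertices are separated by some $h_\ell$, and (ii) $k$-splitting, i.e., for every $k$-subset $K\subseteq V$ some $h_\ell$ is injective on $K$. Both properties can be ensured at the stated sizes by standard splitter/error-correcting-code constructions (or by the probabilistic method plus a union bound). The universe $U$ is then indexed by triples $(\ell,\{i,j\},s)$ where $\ell\in[t]$, $\{i,j\}\in\binom{[k]}{2}$ names a pair of clique slots, and $s$ ranges over an $O(k)$-size tag attesting to ``the slot identity''. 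This gives $|U|=t\cdot\binom{k}{2}\cdot O(k)=O(k^3\log n)$. For each edge $e=\{u,v\}\in E(G)$, the set $S_e$ is defined to cover precisely those elements $(\ell,\{i,j\},s)$ whose tag is consistent with $e$ occupying slot pair $\{i,j\}$ under hash $h_\ell$.

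For completeness, given a $k$-clique $K=\{v_1,\dots,v_k\}$ I select its $\binom{k}{2}$ edges and use the natural slot assignment $v_i\mapsto i$ to verify every universe element is covered: in each slab $\ell$, the edge $\{v_i,v_j\}$ covers all tags associated with pair $\{i,j\}$ because these tags are exactly what the clique assignment produces. For soundness, suppose $F\subseteq E(G)$ with $|F|=\binom{k}{2}$ covers $U$. The coverage pattern inside each slab $\ell$ lets me extract a slot-to-vertex assignment; the pair-separating property of $\mathcal{H}$ forces a well-defined vertex per slot, and the $k$-splitting property rules out inconsistent assignments in which two distinct vertices would occupy the same slot. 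Since every pair $\{i,j\}$ must be covered by some edge of $F$, the extracted $k$ vertices are pairwise adjacent in $G$---a $k$-clique.

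The delicate point, which I expect to be the main obstacle, is designing the tag structure and the covering rule so that three constraints are met simultaneously: (a) the $\binom{k}{2}$ clique edges \emph{exactly} tile the universe in the YES case (no element left uncovered); (b) no inconsistent $\binom{k}{2}$-edge choice manages to cover all tags in the NO case; and (c) $|U|=O(k^3\log n)$ while $|S|=|E(G)|$, so that we cannot simply duplicate each edge across all $O(k^2)$ slot assignments. Balancing (a) and (b) is precisely what forces both the separating and splitting properties of $\mathcal{H}$, and balancing these against (c) is what dictates the exponents in $|U|=O(k^3\log n)$ rather than something larger.
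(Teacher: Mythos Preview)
The paper does not prove this lemma; it is quoted from \cite{KLM19,Lin19} without proof, so there is no ``paper's own proof'' to compare against. I will therefore just comment on your proposal on its own merits.

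Your high-level plan---encode vertices via a small family of functions (equivalently, a short code over a $\mathrm{poly}(k)$-size alphabet) so that consistency among the $\binom{k}{2}$ chosen edges can be tested coordinate by coordinate---is exactly the mechanism used in \cite{KLM19,Lin19}. However, as written the proposal has a genuine gap that you have not resolved, and it is precisely the point you flag as ``delicate.''

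You insist that the sets are in bijection with the edges of $G$, so a single set $S_e$ is associated with an edge $e=\{u,v\}$ but \emph{not} with a slot pair $\{i,j\}$. In your completeness argument you then ``use the natural slot assignment $v_i\mapsto i$'' to decide which tags $S_{\{v_i,v_j\}}$ covers; but $S_e$ is a fixed subset of $U$ and cannot depend on an assignment chosen after the fact. Either (a) $S_e$ covers, for every pair $\{i,j\}$, all tags consistent with $u\mapsto i,\ v\mapsto j$ (and symmetrically $u\mapsto j,\ v\mapsto i$), or (b) $S_e$ is committed to a single slot pair. Option (b) forces $|S|=\binom{k}{2}\cdot|E(G)|$ (or $2\binom{k}{2}\cdot|E(G)|$), contradicting your bijection. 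Option (a) makes soundness fail as stated: a single edge can now serve many slot pairs simultaneously, so $\binom{k}{2}$ edges may cover $U$ without any consistent slot-to-vertex assignment being extractable, and your ``extract an assignment from the coverage pattern'' step has no footing. The standard way out, used in \cite{KLM19,Lin19}, is to first pass to \textsc{Multicolored $k$-Clique} (a polynomial-time, parameter-preserving reduction), after which every edge already carries a unique slot pair and option (b) costs nothing; $|S|$ is then the number of bichromatic edges, still $O(n^2)$ up to $k^{O(1)}$ factors.

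A smaller issue: the $k$-splitting property you ask of $\mathcal{H}$ is both stronger than needed and not obviously attainable with only $t=O(\log n)$ functions (a union bound over $\binom{n}{k}$ subsets naturally gives $t=O(k\log n)$, which would push $|U|$ to $O(k^4\log n)$). The actual constructions rely only on pairwise separation with good relative distance (an error-correcting code over alphabet of size $\Theta(k)$ and block length $O(k\log n/\log k)$, or similar), which suffices for the soundness counting argument and lands at $|U|=O(k^3\log n)$.
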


\section{Strong Threshold Graphs}
\label{subsec:stgraph}

In this part, we introduce a combinatorial object called strong threshold graph, which plays an important role in our proof. Intuitively, we compose strong threshold graphs with the original \ksetcover{} instance to produce gaps.
Our threshold graph construction comes from \cite{KL21}. However, we take a new analysis to this construction and establish the stronger threshold property on this type of graphs. Since the construction involves error correcting codes. For the sake of self-containedness, we first put the definition of error correcting codes here. Then, we formally define the strong threshold property used in this work, and finally, we use a new analysis to the construction in \cite{KL21} and prove (See Theorem \ref{thm:graph}) that it is a strong threshold graph.

\begin{definition}[Error Correcting Codes]
	Let $\Sigma$ be a finite set, a subset $\mathcal C:\Sigma^r \to \Sigma^m$ is an error correcting code with message length $r$, block length $m$ and relative distance $\delta$ if for every $x,y \in \Sigma^r$, $\Delta(\mathcal C(x),\mathcal C(y))\ge \delta$.  We denote then $\Delta(\mathcal C)=\delta$. Here $\Delta(x,y)=\frac{1}{m} |\{i \in [m]: x_i \ne y_i\}|$.
\end{definition}

We sometimes abuse notations a little and treat an error correcting code as its image, i.e., $\mathcal C \subseteq \Sigma^m$ and $|\mathcal C|=\Sigma^r$.

Throughout our paper, we use Reed-Solomon Code (RS code). Let $\Sigma$ be a field and $r\le m\le|\Sigma|$. Fix $m$ elements $f_1,\ldots,f_m\in\Sigma$.  The  RS code $\mathcal C^{\rm RS}: \Sigma^r\to \Sigma^m$ is defined as
\[
\forall (a_1,\ldots,a_r)\in\Sigma^r, \mathcal C^{\rm RS}(a_1,a_2,\ldots,a_r)=(\sum_{i\in[r]}a_if_1^{i-1},\sum_{i\in[r]}a_if_2^{i-1},\ldots,\sum_{i\in[r]}a_if_m^{i-1}).
\]

\begin{lemma}[RS code \cite{RSCode}]
\label{lemma:RS}
Given input and output length $r, m$ and alphabet $\Sigma$ such that $|\Sigma|\ge m$, the RS code $\mathcal C^{\rm RS}: \Sigma^r\to \Sigma^m$ satisfies $\Delta(\mathcal C^{\rm RS}) \ge 1-\frac{r}{m}$. 
\end{lemma}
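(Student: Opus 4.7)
The plan is to reduce the distance bound to the classical fact that a nonzero polynomial of degree at most $r-1$ over a field has at most $r-1$ roots. First I would rewrite the encoding in polynomial form: identify a message $(a_1,\ldots,a_r)\in\Sigma^r$ with the polynomial $p(x)=\sum_{i=1}^r a_i x^{i-1}\in\Sigma[x]$ of degree at most $r-1$, so that the codeword $\mathcal{C}^{\rm RS}(a_1,\ldots,a_r)$ is literally the evaluation vector $(p(f_1),p(f_2),\ldots,p(f_m))$. The hypothesis $|\Sigma|\ge m$ is used precisely to ensure that we may choose the evaluation points $f_1,\ldots,f_m$ to be pairwise distinct, which is what makes the evaluation map well-behaved.

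Next I would take two distinct messages $(a_1,\ldots,a_r)\ne(b_1,\ldots,b_r)$, corresponding to polynomials $p\ne q$, and study the difference $p-q$. Because the monomials $1,x,\ldots,x^{r-1}$ are linearly independent and the coefficient vectors of $p$ and $q$ differ, $p-q$ is a nonzero polynomial of degree at most $r-1$; over a field it therefore has at most $r-1$ roots. Since an agreement between $\mathcal{C}^{\rm RS}(a)$ and $\mathcal{C}^{\rm RS}(b)$ at coordinate $j$ means $f_j$ is a root of $p-q$, and the $f_j$ are distinct, the two codewords can agree on at most $r-1$ of the $m$ coordinates. They therefore disagree on at least $m-(r-1)$ coordinates, giving relative Hamming distance at least $\tfrac{m-r+1}{m}\ge 1-\tfrac{r}{m}$, as required.

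I do not expect any real obstacle here; the only thing worth flagging explicitly is the role of the hypothesis $|\Sigma|\ge m$, which is exactly what permits the choice of $m$ distinct evaluation points and is hidden behind the phrase ``fix $m$ elements $f_1,\ldots,f_m\in\Sigma$'' in the statement of the code. Once the polynomial reformulation and the distinctness of evaluation points are in place, the root-count bound finishes the proof in a single line, and no further properties of Reed--Solomon codes (systematic form, linearity, etc.) are needed.
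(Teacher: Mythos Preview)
Your argument is correct and is exactly the standard proof of the Reed--Solomon distance bound: rewrite codewords as evaluations of degree-$(r-1)$ polynomials at the distinct points $f_1,\ldots,f_m$ (which exist because $|\Sigma|\ge m$), and use that a nonzero polynomial of degree at most $r-1$ over a field has at most $r-1$ roots, so two distinct codewords agree on at most $r-1$ coordinates and hence have relative distance at least $1-\frac{r-1}{m}\ge 1-\frac{r}{m}$.

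There is nothing to compare against: the paper does not prove this lemma but simply quotes it from the original Reed--Solomon reference. Your write-up supplies the omitted (and entirely classical) argument, and your remark about the role of the hypothesis $|\Sigma|\ge m$ is on point.
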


Below we describe the properties of threshold graphs we need. Note that in Section \ref{subsec:ourtech} we have compared our threshold property with those in previous work\cite{Lin19}.

\begin{definition}[$(n, k, t, m, h,\varepsilon)$-threshold graph]
\label{def:graph}
    Given $n,k,t,m,h\in \mathbb N^+$ and $\varepsilon\in(0,1)$, a bipartite graph $T = (A\dot\cup B, E)$ is an $(n,k,t,m,h,\varepsilon)$-threshold graph if it has the following properties:
    \begin{itemize}
        \item $A$ consists of $k$ disjoint parts $A = A_1\dot\cup A_2\dot\cup \ldots \dot\cup A_k$ with $|A_i|=n$ for all $i\in[k]$.
        \item $B$ consists of $m$ disjoint parts $B = B_1\dot\cup B_2\dot\cup \ldots \dot\cup B_m$ with $|B_i|=t$ for all $i\in [m]$.
		\item For any $a_{1}\in A_{1},a_{2}\in A_{2}\ldots,a_{k}\in A_{k}$ and every $j \in [m]$, there is a vertex $b \in B_j$ which is a common neighbor of $\{a_{1},\ldots,a_{k}\}$.
		\item For any $X\subseteq A$ and $b_1\in B_1,\ldots,b_m\in B_m$, if there are $\varepsilon m$ indices $j$ such that $|N(b_j)\cap X|\ge k+1$, then $|X|>h$ .
    \end{itemize}
\end{definition}

At the end, we analyze the construction in \cite{KL21} and prove that it is a strong threshold graph. Formally, we have that:

\begin{theorem}
\label{thm:graph}
Given an error correcting code $\mathcal C: \Sigma^r \to \Sigma^m$ with distance $\delta$,
then for any $k\in \mathbb N$ and $\varepsilon \in (0,1)$, a $(|\Sigma|^r,k,|\Sigma|^k,m,\sqrt{\frac{2\cdot \varepsilon}{1-\delta}},\varepsilon)$-threshold graph $G$ can be constructed in time $O((k+m)\cdot |\Sigma|^{O(r+k)})$.
\end{theorem}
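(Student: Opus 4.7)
\medskip

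\noindent\textbf{Construction.} The plan is to adapt the code-based construction of Karthik and Navon. Identify each part $A_i$ with the message space $\Sigma^r$ (so that every $a\in A_i$ corresponds to a codeword $\mathcal C(a)\in\Sigma^m$), and identify each $B_j$ with $\Sigma^k$, whose elements we think of as length-$k$ tuples $b=(b^{(1)},\ldots,b^{(k)})$ over $\Sigma$. Declare $a\in A_i$ adjacent to $b\in B_j$ iff $\mathcal C(a)_j=b^{(i)}$. The part sizes $|A_i|=|\Sigma|^r$ and $|B_j|=|\Sigma|^k=t$ are immediate, and the common-neighbor property is by design: given any $a_1\in A_1,\ldots,a_k\in A_k$ and any $j\in[m]$, the tuple $b:=(\mathcal C(a_1)_j,\ldots,\mathcal C(a_k)_j)\in B_j$ is adjacent to every $a_i$. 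The construction time is dominated by listing, for each $i,j$, and each $a\in A_i$, the $|\Sigma|^{k-1}$ tuples of $B_j$ that agree with $\mathcal C(a)_j$ at coordinate $i$; this is $O(km|\Sigma|^{r+k-1})$, well within the claimed bound.

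\medskip

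\noindent\textbf{Threshold property via double counting.} The heart of the argument is the new soundness condition (iv'). Write $X_i:=X\cap A_i$ and $s_i:=|X_i|$, and let $I\subseteq[m]$ be the set of indices $j$ with $|N(b_j)\cap X|\ge k+1$, so $|I|\ge\varepsilon m$. Fix any $j\in I$. Because $X\cap N(b_j)$ has more than $k$ elements spread among $k$ parts, pigeonhole gives a part $A_{i(j)}$ in which at least two distinct $x,y\in X_{i(j)}$ are both neighbors of $b_j$, hence $\mathcal C(x)_j=b_j^{(i(j))}=\mathcal C(y)_j$. More quantitatively, letting $c_{i,j}:=|A_i\cap N(b_j)\cap X|$, the number of ordered pairs in $\bigcup_i X_i^2$ colliding at coordinate $j$ is $\sum_i c_{i,j}(c_{i,j}-1)$, and minimising this subject to $\sum_i c_{i,j}\ge k+1$ over $k$ parts gives at least $2$.

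I will now count the quantity
\[
N:=\sum_{j\in[m]}\sum_{i\in[k]}\bigl|\{(x,y)\in X_i\times X_i:x\ne y,\ \mathcal C(x)_j=\mathcal C(y)_j\}\bigr|
\]
in two ways. From the above, $N\ge\sum_{j\in I}2=2|I|\ge 2\varepsilon m$. On the other hand, for any fixed $x\ne y\in X_i$, the distance guarantee $\Delta(\mathcal C(x),\mathcal C(y))\ge\delta$ means they agree on at most $(1-\delta)m$ coordinates, so
\[
N\le (1-\delta)m\sum_{i\in[k]}s_i(s_i-1)\le (1-\delta)m\sum_{i\in[k]}s_i^2\le (1-\delta)m\,|X|^2,
\]
where the last step uses $\sum_i s_i^2\le(\sum_i s_i)^2=|X|^2$. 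Combining, $|X|^2\ge 2\varepsilon/(1-\delta)$, in fact strict once any $s_i\ge 1$, so $|X|>\sqrt{2\varepsilon/(1-\delta)}=h$ as required.

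\medskip

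\noindent\textbf{Where the difficulty lies.} The construction itself is essentially from \cite{KL21}; the only real content is the sharper soundness analysis above, whose crucial ingredient is the $k+1$-versus-$k$ gap in property (iv'). That single extra neighbor is what forces, for each $j\in I$, a monochromatic collision of codewords inside some $A_i$, turning the distance of $\mathcal C$ into a quadratic lower bound on $|X|$. The only place one must be careful is to check that the double-counting inequality is strict (so that one obtains $|X|>h$ rather than $|X|\ge h$); this is immediate because $\sum_i s_i^2=(\sum_i s_i)^2$ would force $X$ to sit in a single part, and then $\sum_i s_i^2>\sum_i s_i(s_i-1)$. No other step looks delicate.
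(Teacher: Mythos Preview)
Your proof is correct and follows essentially the same approach as the paper: the code-based construction is identical, and your soundness argument is the same double-counting of codeword collisions against the distance bound, differing only cosmetically in that you count ordered same-part pairs (yielding $|X|^2>2\varepsilon/(1-\delta)$ directly) while the paper counts unordered pairs over all of $X$ to reach $\binom{|X|}{2}(1-\delta)\ge\varepsilon$. One minor remark: your sentence ``the number of ordered pairs \ldots\ colliding at coordinate $j$ is $\sum_i c_{i,j}(c_{i,j}-1)$'' should read ``is at least'', since pairs outside $N(b_j)$ may also collide at $j$; this does not affect the argument.
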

\begin{proof}
    The construction is as follows.
    \begin{itemize}
    	\item For every $i \in [k]$,  $A_i=\{\mathcal  C(x):x\in\Sigma^r\}$.
    	\item For every $j \in [m]$,  $B_j=\Sigma^k$.
    	\item A vertex $a \in A_i$ and a vertex $b \in B_j$ are linked if and only if $(a)_j=(b)_i$, where $(a)_j$ means the $j$-th element of vector $a$.
    \end{itemize}
    Fix any $a_1\in A_1, a_2\in A_2, \ldots, a_k\in A_k$ and an index $j\in [m]$. Let $b = ((a_1)_j, (a_2)_j,\ldots,(a_k)_j)$.
    It's easy to see $b \in B_j$ is a common neighbor of $a_1, a_2, \ldots, a_k$.
    
    Fix any $b_1\in B_1, b_2\in B_2, \ldots, b_m\in B_m$ and $X\subseteq A$. Suppose there are $\varepsilon m$ indices $j$ such that $|N(b_j)\cap X|\ge k+1$. Since $A$ is divided into $k$ parts, for each such index $j$, there must be an index $i\in [k]$ such that $N(b_j)$ contains at least two vertices in $X\cap A_i$. 
    Let $x,x'\in X\cap A_i$ be two different vertices, then $(x)_j = (x')_j = (b_j)_i$. 
    Define \[L_{x, x'} = \{i\in [m]: (x)_i = (x')_i\},\] we have \[\sum_{x\neq x'\in X} |L_{x, x'}|\ge \varepsilon m.\]
    However, according to the distance of codewords, for every $x,x' \in X, x \ne x'$, \[|L_{x, x'}|\le (1-\delta)m.\] This leads to $\binom{|X|}{2} (1 - \delta)\ge \varepsilon$, i.e., $|X|> \sqrt{\frac{2\varepsilon}{1-\delta}}$.
\end{proof}

\section{Proof of the Main Theorem}
\label{sec:reduction}

\begin{theorem}
\label{thm:reduction}
There is a reduction which, given an unweighted \textsc{$k$-SetCover} instance $\Gamma=(S,U,E)$, an $(n,k,t,m,h,\varepsilon)$-threshold graph $T = (A\dot\cup B, E_T)$ where $n = |S|$ and $m \le n^{O(1)}$, and an integer $c\in \mathbb{N}^+$, outputs a new $2$-weighted \textsc{$k$-SetCover} instance $\Gamma'=(S',U', E',w)$  with the following properties:
	\begin{itemize}
		\item (Completeness) If $\mathsf{OPT}(\Gamma)\le k$, then $\mathsf{OPT}(\Gamma') \le 2m$.
		\item (Soundness) If $\mathsf{OPT}(\Gamma)>k$, then $\mathsf{OPT}(\Gamma')> \min\{mh/k,(1-\varepsilon)mc\}$.
		\item The reduction runs in $|\Gamma|^{O(1)}\cdot (|U|t)^{O(c)}$ time.
	\end{itemize}
\end{theorem}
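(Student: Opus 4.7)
The plan is to realize the composition sketched in Section 1.1 as a concrete reduction. I set $S' := A \cup B$ with weights $w(a) = m/k$ for $a \in A$ and $w(b) = 1$ for $b \in B$ (assuming $k \mid m$ for integrality; otherwise scale uniformly). The universe is the disjoint union
\[
U' := \bigsqcup_{i \in [m]} U^c \times B_i^c.
\]
The edges are designed to force the dichotomy from Section 1.1. Concretely, a vertex $a \in A$ is made adjacent to $(u_1, \ldots, u_c, b_1, \ldots, b_c) \in U^c \times B_i^c$ iff there is some $j \in [c]$ with $u_j \in N_\Gamma(a)$ and $b_j \in N_T(a)$; a vertex $b \in B_i$ is made adjacent to this element iff $b \notin \{b_1, \ldots, b_c\}$; and vertices in $B_j$ with $j \ne i$ have no edges into this block. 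Thus, for a candidate cover $X \cup Y$ with $Y_i := Y \cap B_i$, the $i$-th block is covered by $Y$ alone iff every $c$-tuple $(b_1, \ldots, b_c) \in B_i^c$ fails to contain $Y_i$, equivalently iff $|Y_i| \ge c+1$.

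For completeness, given a size-$k$ cover $\{s_1, \ldots, s_k\} \subseteq S$ of $U$ in $\Gamma$, identify each $A_j$ with $S$ and lift to $a_j \in A_j$. Property (iii) of $T$ supplies, for every $i \in [m]$, some $b_i \in B_i$ that is a common neighbor of $a_1, \ldots, a_k$. A short two-case check on whether $b_i \in \{b_1, \ldots, b_c\}$ for a given element (if yes, the $a_\ell$ covering the matching $u_{j^*}$ works since $a_\ell \sim b_i$ in $T$; if no, $b_i$ itself covers directly) shows that $\{a_1, \ldots, a_k\} \cup \{b_1, \ldots, b_m\}$ covers all of $U'$. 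The total weight is $k \cdot (m/k) + m \cdot 1 = 2m$, giving the completeness bound.

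For soundness, suppose $\mathsf{OPT}(\Gamma) > k$ and let $X \cup Y$ be any cover of $\Gamma'$. Partition $[m]$ into $I := \{i : |Y_i| \le c\}$ and its complement. The heart of the argument is the claim: for each $i \in I$, there exists $b_i \in B_i$ with $|N_T(b_i) \cap X| \ge k+1$. I establish this by choosing any tuple $(b_1^*, \ldots, b_c^*) \in B_i^c$ with $\{b_1^*, \ldots, b_c^*\} \supseteq Y_i$ (possible since $|Y_i| \le c$, with arbitrary padding if $Y_i = \emptyset$). Then $Y$ leaves every element $(u_1, \ldots, u_c, b_1^*, \ldots, b_c^*)$ uncovered, so $X$ must cover all of them. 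Letting $X_j := N_T(b_j^*) \cap X$, ranging over all $(u_1, \ldots, u_c) \in U^c$ and taking contrapositives yields
\[
\prod_{j \in [c]} \bigl(U \setminus N_\Gamma(X_j)\bigr) = \emptyset,
\]
so some $X_{j_0}$ covers $U$ in $\Gamma$; since $\mathsf{OPT}(\Gamma) > k$, $|X_{j_0}| \ge k+1$, and I set $b_i := b_{j_0}^*$. A case split now finishes: if $|I| \ge \varepsilon m$, property (iv') applied to $X$ and any extension of $\{b_i\}_{i \in I}$ to a full tuple in $B_1 \times \cdots \times B_m$ gives $|X| > h$, hence $w(X) > mh/k$; otherwise $|[m] \setminus I| > (1-\varepsilon)m$ forces $|Y| \ge (c+1)(1-\varepsilon)m > (1-\varepsilon)mc$.

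The main obstacle is the \emph{product argument} embedded in the soundness step: converting an $A$-cover of a single uncovered block into a single coordinate $b_{j_0}^* \in B_i$ that has many neighbors in $X$, which is exactly the hypothesis property (iv') needs. Getting the quantifiers $\forall (u_1, \ldots, u_c)\ \exists j$ to collapse to $\exists j\ \forall u_j$ is precisely what the product structure of $U^c$ buys us, and it is also what dictates the exponent $c$ in the running time. Once this is in place, the remaining bookkeeping is routine: the graph has $|U'| = m(|U|t)^c$ and $|S'| = kn + mt$, giving construction time $|\Gamma|^{O(1)} \cdot (|U|t)^{O(c)}$ as required, and the $2$-weighted instance can be converted to an unweighted one via Lemma \ref{lem:remove-weights} without exceeding this budget whenever it is needed downstream.
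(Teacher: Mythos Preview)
Your proposal is correct and essentially identical to the paper's proof: the construction of $S'$, $U'$, the weights, and the edge sets all match exactly, and your completeness and soundness arguments follow the same two-case logic as the paper (your product formulation $\prod_{j}(U\setminus N_\Gamma(X_j))=\emptyset$ is just a compact rephrasing of the paper's contrapositive that picks an uncovered $\hat u_j$ for each $j$). The only cosmetic difference is that you organize the soundness case split around $I=\{i:|Y_i|\le c\}$ up front, whereas the paper first states the per-index dichotomy and then splits on how many indices satisfy the ``many-neighbors'' alternative; these are equivalent.
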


\begin{proof}
    Let $A=A_1\dot\cup A_2\dot\cup \ldots\dot\cup A_k$ and $B = B_1\dot\cup B_2\dot\cup \ldots\dot\cup B_m$. 
    For every $i \in [k]$, we treat each $A_i$ as a copy of $[n]$. Let $s: [n]\to S$ be a bijection. 
    The new instance $\Gamma'=(S',U',E')$ and $w$ is defined as follows.
    \begin{itemize}
        \item $S'=A\dot\cup B$.
        \item $U'=\{(u_1,\ldots,u_c,b_1,\ldots,b_c,i): (u_1,\ldots,u_c)\in U^c, (b_1,\ldots,b_c)\in B_i^c,i\in[m]\}.$
        \item For every $a\in A$, $w(a) = m/k$. For every $b\in B$, $w(b) = 1$.
        \item For every $a\in A$ and $\vec{u}=(u_1,\ldots,u_c,b_1,\ldots,b_c,i)\in U'$, we link $a$ and $\vec{u}$ if there exists $j\in[c]$ such that $(a,b_j)\in E_T$ and $(s(a), u_j)\in E$, where $s(a)$ is the matching vertex of $a$ in $S$.
        \item For every $b\in B$ and
		$\vec{u}=(u_1,\ldots,u_c,b_1,\ldots,b_c,i)\in U'$, we link $b$ and $\vec{u}$ if $b\in B_i$ and $b\neq b_j$ for all $j\in[c]$.
    \end{itemize}
    
    It is easy to see the reduction can be done in  
    $\left((nk + tm)\cdot (|U| t)^c\cdot m\right)^{O(1)} = |\Gamma|^{O(1)}\cdot (|U|t)^{O(c)}$ time.
    
For the completeness case, let the solution in $\Gamma$ be $a_{1},\ldots,a_{k}\in S$.
By the property of threshold graph, $a_1\in A_1,\ldots,a_k\in A_k$ have common neighbors $b_1 \in B_1, \ldots, b_m \in B_m$ in $T$. 

We claim that $\{a_1,\ldots,a_k,b_1,\ldots,b_m\} \subseteq  S'$, which is of weight $2m$, is a valid solution of $\Gamma'$. 

For every $\vec{u}=(u_1,\ldots,u_c,\hat b_1,\ldots,\hat b_c,i)\in U'$ with $(\hat b_1,\ldots,\hat b_c) \in B_i^c$ for some $i\in[m]$,

\begin{itemize}
	\item If $b_i$ does not appear in $\{\hat b_1,\ldots,\hat b_c\}$, then according to our construction, $(b_i, \vec{u})\in E'$.
	\item Otherwise suppose $b_i=\hat b_j$ for some $j \in [c]$.  Since $\{a_1,\ldots,a_k\}$ is a valid covering of $U$ in $\Gamma$, there exists $j^*\in[k]$ such that $(s(a_{j^*}), u_j)\in E$. Note that $(a_{j^*},\hat{b}_j)\in E_T$,  we have $(a_{j^*}, \vec{u})\in E'$ by definition.
\end{itemize}

For the soundness case, consider a solution $\{{a_1},\ldots,{a_q},{b_1},\ldots,{b_r}\}$ with $a_i\in A$ and $b_j\in B$, we first prove that, for every $i \in [m]$,
\begin{itemize}
	\item Either $\{b_1,\ldots,b_r\}$  contains at least $c+1$ vertices in $B_i$,
	\item Or there exists $j\in[r]$, such that $b_j\in B_i$ is a common neighbor of at least $k+1$ vertices in $\{a_1,\ldots,a_q\}$ in the threshold graph $T$.
\end{itemize}

Suppose it is not the first case, i.e., the set $\{b_1,\ldots,b_r\}$  contains no more than $c$ vertices in $B_i$, for simplicity of notation let those vertices be $\hat b_1,\ldots, \hat b_c$ (we allow duplication so that the number of vertices can always be $c$).

Consider the set $U_{\hat b_1,\ldots,\hat b_c}=\{(u_1,\ldots,u_c,\hat b_1,\ldots,\hat b_c,i)\in U':(u_1,\ldots,u_c)\in U^c\}$. According to our assumption, $\{{b_1},\ldots,{b_r}\}$ does not cover $U_{\hat b_1,\ldots,\hat b_c}$. So $U_{\hat b_1,\ldots,\hat b_c}$ can only be covered by $\{{a_1},\ldots,{a_q}\}$. Suppose by contradiction that for every $j\in [c]$,  $\hat b_j$ has at most $k$ neighbors in $\{a_1,\ldots,a_q\}$, i.e. the set $N_{\hat b_j}=\{{a_\ell}:\ell\in [q], a_\ell\in N(\hat b_j) \}$ has size at most $k$. Since $\mathsf{OPT}(\Gamma) > k$, there must be some $\hat u_j\in U$ not covered by $N_{\hat b_j}$. Hence for every $j\in [c]$ and $\ell\in [q]$, either $(a_\ell, \hat b_j)\notin E_T$ or $(s(a_{\ell}), \hat u_j)\notin E$, i.e. the vertex $\vec{u}=(\hat u_1,\ldots,\hat u_c,\hat b_1,\ldots,\hat b_c,i)$ is not covered by $\{{a_1},\ldots,{a_q}\}$. 

If there are $\varepsilon m$ indices $i\in[m]$ such that in the threshold graph $T$, there exists $b_j\in B_i$ that has $k+1$ neighbors in $\{a_1,\ldots,a_q\}$, then by the property of threshold graph, $q>h$. It follows that $w(\{a_1,\ldots,a_q\})>h\cdot m/k$.
Otherwise, there are $(1-\varepsilon)m$ indices $i\in[m]$ such that $\{b_1,\ldots,b_r\}$ contains $c+1$ vertices in $B_i$, we have $w(\{b_1,\ldots,b_r\}) > (1-\varepsilon)cm$.
\end{proof}

Now we are ready to prove the $\mathsf{W[2]}$-hardness of constant gap \ksetcover{}.

\begin{theorem}[Restated version of Theorem \ref{w2hardness}]
    Assuming $\mathsf{W[2]} \neq \mathsf{FPT}$, there is no deterministic FPT algorithm which can approximate {\sc\ksetcover{}} within any constant ratio.
\end{theorem}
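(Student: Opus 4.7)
My plan is to derive Theorem~\ref{w2hardness} by plugging the Reed--Solomon--based strong threshold graph of Theorem~\ref{thm:graph} into the composition reduction of Theorem~\ref{thm:reduction}, with parameters tuned so the resulting gap is an arbitrary constant $C>1$ while the whole construction stays FPT. Once such an FPT-reduction from \ksetcover{} to $C$-gap \ksetcover{} exists, any FPT $C$-approximation algorithm for \ksetcover{} would solve the exact problem in FPT time, contradicting $\mathsf{W[2]}\ne\mathsf{FPT}$.

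Concretely, fix $C>1$ and a \ksetcover{} instance $(\Gamma=(S,U,E),k)$ with $|S|=n$. I would choose $\varepsilon=\tfrac12$, $h=k^2$, $m=k^5$, $c=5C$, $r=k$, and $|\Sigma|=\max\{\lceil n^{1/k}\rceil,m\}$, padding $S$ to have size $|\Sigma|^r$. Lemma~\ref{lemma:RS} then provides a Reed--Solomon code $\mathcal C^{\mathrm{RS}}\colon\Sigma^r\to\Sigma^m$ of distance $\delta\ge 1-r/m=1-k^{-4}$, which Theorem~\ref{thm:graph} turns into an $(n,k,t,m,h',\varepsilon)$-threshold graph with $t=|\Sigma|^k$ and $h'\ge\sqrt{2\varepsilon/(1-\delta)}\ge k^2=h$. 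Feeding this graph and $c$ into Theorem~\ref{thm:reduction} and applying Lemma~\ref{lem:remove-weights} to kill the $2$-weights yields an unweighted instance $\Gamma''$ with new parameter $k':=2m=2k^5$.

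In the completeness case Theorem~\ref{thm:reduction} guarantees $\mathsf{OPT}(\Gamma'')\le 2m=k'$, while in the soundness case $\mathsf{OPT}(\Gamma'')>\min\{mh/k,(1-\varepsilon)mc\}=\min\{k^6,(5C/2)k^5\}$, which exceeds $Ck'$ for all $k$ above a constant depending on $C$; smaller $k$ are handled directly by brute force in $f(k)\cdot|\Gamma|^{O(1)}$ time. The total running time $|\Gamma|^{O(1)}\cdot(|U|t)^{O(c)}$ is bounded by $f(k)\cdot|\Gamma|^{O(C)}$, because $c=5C$ is constant and $t=|\Sigma|^k\le\max\{n,k^{5k}\}$; the weight-removal step adds only a polynomial overhead since the maximum weight is $m/k=k^4$. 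This is an FPT-reduction from \ksetcover{} to $C$-gap \ksetcover{}, so the theorem follows.

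The main obstacle is the simultaneous balance of four constraints: (i) $h\ge 2Ck$ and $(1-\varepsilon)c\ge 2C$ so the composition gap in Theorem~\ref{thm:reduction} is at least $C$; (ii) $m\ge h^2 r/(2\varepsilon)$ so the Reed--Solomon distance is enough to deliver a threshold as large as $h$; (iii) $|\Sigma|\ge m$ and $|\Sigma|^r\ge n$ so Lemma~\ref{lemma:RS} applies and the code covers the ground set $S$; and (iv) $t^{O(c)}$ stays FPT in $k$, which is precisely the point where the earlier threshold-graph arguments failed under $\mathsf{W[2]}\ne\mathsf{FPT}$. The choice $(h,m,c,r,|\Sigma|)=(k^2,k^5,5C,k,\max\{n^{1/k},k^5\})$ satisfies all four, so the strong threshold property (iv') of Definition~\ref{def:graph} is exactly what lets us bypass the universe-compression barrier highlighted in Section~\ref{subsec:ourtech}.
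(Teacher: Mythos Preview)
Your proposal is correct and follows essentially the same route as the paper's proof: both instantiate the Reed--Solomon threshold graph of Theorem~\ref{thm:graph} with $r=k$, $m=k^5$, $\varepsilon=\tfrac12$, $|\Sigma|\approx n^{1/k}$ (yielding $h=k^2$), plug it into Theorem~\ref{thm:reduction} with $c$ a constant multiple of the target ratio, and then strip weights via Lemma~\ref{lem:remove-weights}. The only cosmetic differences are your choice $c=5C$ versus the paper's $c=4c_0$, and your explicit padding/$\max$ to guarantee $|\Sigma|\ge m$ where the paper instead assumes $|S|\ge k^{5k}$ and brute-forces small instances.
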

\begin{proof}
    For any constant $c_0>0$, we give an FPT reduction from a \ksetcover{} instance $\Gamma=(S,U,E)$ to a $c_0$-gap $k'$-\setcover{} instance as follows. 
    Note that without loss of generality, we could assume that our reduction only holds for large enough $\kappa(x)$ and $|x|$, since we could apply enumerative search to solve small instances.
    We assume that $k\ge 2c_0,|S|\ge k^{5k}$ and $|U| \le |S|$. Let
    \begin{itemize}
        \item $n=|S|$,
        \item $r=k$,
        \item $m=k^5$,
        \item $c=4c_0$,
        \item $|\Sigma|=n^{1/k} \ge m$,
        \item $\varepsilon=1/2$.
    \end{itemize}
    By Lemma \ref{lemma:RS}, the Reed-Solomon code $\mathcal{C}^{\rm RS}:\Sigma^r \to \Sigma^m$ has distance at least $\delta=1 - \frac{r}{m}=1-\frac{1}{k^4}$. Then by Theorem \ref{thm:graph}, we can construct a $(|\Sigma|^r=n, k, |\Sigma|^k=n, m=k^5, \sqrt{\frac{2\cdot \varepsilon}{1-\delta}}=k^2, \varepsilon=\frac{1}{2})$-threshold graph in $O((k+m)\cdot |\Sigma|^{O(r+k)})=n^{O(1)}$ time. After that, we apply Theorem \ref{thm:reduction} to get a $2$-weighted gap \textsc{SetCover} instance in time $(n^{O(1)})\cdot (n^2)^{O(c)}=n^{O(c)}$, where in the yes case, the optimal solution has size at most $2m$, and in the no case, the optimal solution has size greater than $\min\{m\cdot k^2/k,(1-\varepsilon)\cdot mc\}=2c_0m$. Finally, we apply Lemma \ref{lem:remove-weights} to remove the weights in $\left(n^{O(c)}\cdot \frac{m}{k}\right)^{O(1)}=n^{O(c)}$ time. The whole reduction runs in time polynomial in $n$, and the new parameter $2m=2k^5$ is a function of $k$. Thus our reduction is an FPT reduction.
\end{proof}

Next we prove the hardness of approximating non-parameterized \ksetcover{} without using the PCP theorem.
\begin{theorem}[Restated version of Theorem \ref{w1hardness}]
    Assuming $\mathsf{W[1]} \neq \mathsf{FPT}$, there is no polynomial time algorithm which can approximate non-parameterized {\sc $k$-SetCover} within $o\left(\frac{\log n}{\log\log n}\right)$ ratio, even if $k$ is as small as $O\left(\frac{\log n}{\log \log n}\right)^3$.
\end{theorem}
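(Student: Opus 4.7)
The approach mirrors the proof of Theorem \ref{w2hardness}, but it is re-tuned for two new goals: the reduction must run in \emph{polynomial} (rather than FPT) time, and the gap must be $\omega(1)$ --- specifically $\Omega(\log n/\log\log n)$ --- rather than a constant. Write $L=\log n/\log\log n$. The crucial new ingredient is Lemma \ref{lemma:smallu}: under $\mathsf{W[1]}\neq\mathsf{FPT}$ we may assume the starting \ksetcover{} instance has a polylogarithmic universe, and this smallness of $|U|$ is precisely what lets us push the composition parameter $c$ in Theorem \ref{thm:reduction} far above the constant used for Theorem \ref{w2hardness}.

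Concretely, given a $k$-Clique instance on $n$ vertices, I would first apply Lemma \ref{lemma:smallu} to obtain a \ksetcover{} instance $\Gamma=(S,U,E)$ with $|S|=O(n^2)$, $|U|=O(k^3\log n)$, and input parameter $K=\binom{k}{2}$; then build a strong threshold graph $T$ via Theorem \ref{thm:graph} using a Reed-Solomon code $\mathcal C^{\rm RS}:\Sigma^{r}\to\Sigma^{m}$, with $|\Sigma|^{r}=|S|$, $\varepsilon=1/2$, and $r,m,|\Sigma|$ to be chosen; apply Theorem \ref{thm:reduction} to $(\Gamma,T)$ with composition parameter $c$ to produce a $2$-weighted gap-\ksetcover{} instance of parameter $2m$, whose completeness is at most $2m$ and whose soundness exceeds $\min\{mh/K,(1-\varepsilon)mc\}$ with $h=\sqrt{2\varepsilon m/r}$; and finally strip the weights via Lemma \ref{lem:remove-weights}. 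The parameters must jointly satisfy (a) $2m=O(L^{3})$, (b) both $h/(2K)$ and $(1-\varepsilon)c/2$ are $\Omega(L)$, and (c) the total running time $|\Gamma|^{O(1)}\cdot(|U|\cdot|\Sigma|^{K})^{O(c)}$ is polynomial in $n$. Following the choice sketched in \S\ref{subsec:ourtech}, I would take $k$ to be a slowly growing function of $n$, $c=\Theta(L/k)$, $r=\Theta(cK)$ (so that $|\Sigma|^{cK}$ stays polynomial in $n$), and $m=\Theta(L^{3})$; the polylogarithmic size of $|U|$ provided by Lemma \ref{lemma:smallu} is what keeps $|U|^{c}$ polynomial and makes (c) compatible with (a) and (b).

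The main obstacle is this parameter bookkeeping. In the W[2] reduction $c$ is a fixed constant and the choice $r=k$, $m=k^{5}$, $|\Sigma|=n^{1/k}$ leaves ample slack; here $c$ must itself grow with $n$ while the output parameter stays in $O(L^{3})$, and the two sides of $\min\{h/(2K),(1-\varepsilon)c/2\}$ interact through $r$, which must be large enough to keep $|\Sigma|^{cK}$ polynomial in $n$ but small enough for $h=\sqrt{2\varepsilon m/r}$ to remain of order $LK$. Once this calibration closes, any $o(L)$-approximation polynomial-time algorithm for the output \ksetcover{} instance would distinguish the two sides of the gap and hence, via the reduction, solve $k$-Clique in polynomial time for some $k=k(n)\to\infty$, contradicting $\mathsf{W[1]}\neq\mathsf{FPT}$.
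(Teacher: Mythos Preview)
Your approach matches the paper's: Lemma~\ref{lemma:smallu}, then Theorem~\ref{thm:graph} with a Reed--Solomon code, then Theorem~\ref{thm:reduction} with super-constant $c$, then Lemma~\ref{lem:remove-weights}. For comparison, the paper's concrete choices (after renaming the \ksetcover{} parameter to $k$, i.e.\ your $K$) are $r=L$, $m=|\Sigma|=L^{3}$, $\varepsilon=\tfrac12$, $c=L/k$, giving $h=L$ and gap $\Theta(L/k)$. Note that the paper sets $|\Sigma|=m$ directly and lets $|\Sigma|^{r}=L^{3L}=n^{\Theta(1)}$ overshoot $|S|$, rather than forcing $|\Sigma|^{r}=|S|$ as you do. With your choice $r=\Theta(cK)$ and $c=\Theta(L/k_{\mathrm{Clique}})$ one gets $r=\Theta(Lk_{\mathrm{Clique}})$ and hence $|\Sigma|=|S|^{1/r}\approx(\log n)^{O(1/k_{\mathrm{Clique}})}$, which violates the Reed--Solomon requirement $|\Sigma|\ge m$ in Lemma~\ref{lemma:RS}; so your specific calibration does not close, though you correctly flag this bookkeeping as the crux.

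One conceptual point needs fixing. You cannot ``take $k$ to be a slowly growing function of $n$'': $k$ is the parameter of the input Clique instance, not a quantity the reduction controls, and solving $k(n)$-Clique in polynomial time for a single sequence $k(n)\to\infty$ does not by itself contradict $\mathsf{W[1]}\neq\mathsf{FPT}$. The paper runs the contradiction the standard way: assume a poly-time approximation with ratio $L/g(n)$ for some non-decreasing $g(n)\to\infty$; for each \emph{fixed} $k$, once $n$ is large enough that $g(n)\ge 2k$ the algorithm distinguishes the $\Theta(L/k)$ gap, and for smaller $n$ one brute-forces. This yields, for every $k$, an FPT algorithm for $k$-Clique, which is the desired contradiction.
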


\begin{proof}
    Suppose by contradiction there is an algorithm which can approximate non-parameterized \ksetcover{} within a ratio less than $\frac{\log n}{g(n)\log \log n}$ for some $g(n)=\omega(1)$ in $n^{O(1)}$ time, then we give an algorithm which can solve $k$-{\sc Clique} in FPT time. 
    W.l.o.g. assume $g$ is a non-decreasing function.
    
    Given a $k$-{\sc Clique} instance $G = (V, E)$, we first reduce it to a $k'$-\setcover{} instance $\Gamma = (S, U, E)$ where $|U| = O((k')^2\log |S|)$ by Lemma \ref{lemma:smallu}. We will use $k$ for $k'$ in the following for clarity. W.l.o.g. assume $n=|S|$ is large enough that $$\min\left\{g(n), \frac{\log n}{\log \log n}\right\}\ge \max\{2k,100\},$$
    otherwise we use brute-force to solve $\Gamma$, and thus solve $G$ in FPT time. Let 
    \begin{itemize}
        \item $r=\frac{\log n}{\log \log n}$,
        \item $m=\left(\frac{\log n}{\log \log n}\right)^3$,
        \item $c=\frac{\log n}{k \log \log n}$,
        \item $|\Sigma|=\left(\frac{\log n}{\log \log n}\right)^3$,
        \item $\varepsilon=1/2$.
    \end{itemize}
    By Lemma \ref{lemma:RS}, the Reed-Solomon code $\mathcal C^{RS}:\Sigma^r \to \Sigma^m$ has distance at least $\delta=1-\frac{r}{m}=1-\left(\frac{\log \log n}{\log n}\right)^2$. Then by Theorem \ref{thm:graph}, we can construct a $(|\Sigma|^r=n^{O(1)},k,|\Sigma|^k = \left(\frac{\log n}{\log \log n}\right)^{3k},m=k^5, \sqrt{\frac{2\cdot \varepsilon}{1-\delta}}=\frac{\log n}{\log \log n},\varepsilon=\frac{1}{2})$-threshold graph in $O((k+m)\cdot |\Sigma|^{O(r+k)})=n^{O(1)}$ time. After that, we apply Theorem \ref{thm:reduction} to get a 2-weighted gap \setcover{} instance in time $n^{O(1)}\cdot \left(k^2 \log n\cdot \left(\frac{\log n}{\log \log n}\right)^{3k}\right)^{c}=n^{O(1)}$, where in the yes case, the optimal solution has size at most $2m=2\left(\frac{\log n}{\log \log n}\right)^3$ and in the no case, the optimal solution has size greater than $\min\{m\cdot \frac{\log n}{\log \log n}/k, (1-\varepsilon)\cdot mc\}=\frac{\log n}{2k \log \log n}$. We apply Lemma \ref{lem:remove-weights} to remove the weights in $n^{O(1)}$ time. Finally, as $\frac{\log n}{2k \log \log n}\ge \frac{\log n}{g(n) \log \log n}$, the presumed approximation algorithm for non-parameterized \ksetcover{} can distinguish between the two cases in $n^{O(1)}$ time, and thus solve {\sc $k$-Clique} in FPT time, contradicting $\wone\neq\fpt$.
\end{proof}
\section{Conclusion}
\label{sec:conclusion}
In this paper, we settle the $\wtwo$-hardness of approximating \ksetcover{} with constant approximation ratio. Our result could also be applied to rule out polynomial algorithm approximating non-parameterized {\sc $k$-SetCover} within ratio $o\left(\frac{\log n}{\log \log n}\right)$, with $k$ as small as $O\left(\frac{\log n}{\log \log n}\right)^3$, assuming $\wone \ne \fpt$.  

As further research questions, it is interesting to consider $\wtwo$-hardness of approximating \ksetcover{} beyond constant ratio.

\begin{question}
Is it $\wtwo$-hard to approximate \ksetcover{} with super-constant ratio?
\end{question}

Our technique fails to answer this question, because given approximation ratio $c$, our reduction runs in time $\Omega(|U|^{c})$, which will result in a non-FPT reduction if $c$ is not constant.

It is well-known that the textbook greedy algorithm approximates \ksetcover{} with ratio $O(\log n)$, and this is also the best approximation algorithm. However, assuming $\wone\ne\fpt$, the state-of-the-art result \cite{Lin19} rules out FPT algorithms approximating \ksetcover{} within approximation ratio $\left(\frac{\log n}{\log \log n}\right)^{1/k}$, and there is still a gap. Could we further improve this inapproximability, or does there exist FPT algorithms approximating \ksetcover{} with better approximation ratio? This leads to the following question:

\begin{question}
Is there any  FPT algorithm approximating \ksetcover{} within approximation ratio $o(\log n)$?
\end{question} 


Finally, we ask whether the lower bound of our Theorem \ref{w1hardness} could be improved under stronger assumptions (e.g., $\mathsf{SETH}$), which leads to the following question:

\begin{question}
For $k = (\log n)^{O(1)}$, assuming $\mathsf{SETH}$, is there any algorithm with running time $n^{k-o(1)}$ approximating \ksetcover{} with approximation ratio $o\left(\frac{\log n}{\log \log n}\right)$?
\end{question}




\bibliographystyle{alpha}
\bibliography{main}
\section*{Appendix}
\label{appendix:A}
We prove that, for any computable function $f$,  \ksetcover{} with universe set size at most $f(k)\log n$ is in $\mathsf{W[1]}$. We show a reduction from such a \ksetcover{} instance to a {\sc $k'$-Clique} instance. The idea is as follows. For each element in $U$, it should be covered by some set from one of the $k$ groups. We divide $U$ into $f(k) \cdot \log k$ parts of size $\log n/\log k$, and use a variable in $[k]^{\log n/\log k}$ for each part to encode which groups the sets covering those $\log n/\log k$ elements are from. For each group of sets, we also use a variable to indicate which set is picked from this group. We add constraints between the two types of variables to check whether the alleged set is eligible for covering corresponding elements, and transform this 2CSP instance to a {\sc Clique} instance in the canonical way.
\begin{lemma}\label{lem:setcoverw1}
There is an FPT-reduction which, given an instance $\Gamma=(S,U,E)$ of \ksetcover{} with $|U|=f(k)\cdot \log n$, outputs a graph $G$ and an integer $k'=k+f(k)\cdot\log k$, such that
\begin{itemize}
    \item if $\mathsf{OPT}(\Gamma)=k$, then $G$ contains a $k'$-clique,
    \item  if $\mathsf{OPT}(\Gamma)<k$, then $G$ contains no $k'$-clique.
\end{itemize}
\end{lemma}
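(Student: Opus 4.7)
The plan is to encode the $k$-\textsc{SetCover} instance as a $2$-\textsc{CSP} on only $k' = k + f(k)\log k$ variables, each with a polynomial-size domain, and then apply the textbook $2$-\textsc{CSP}-to-\textsc{Clique} reduction. The key observation making this possible is that, if we partition $U$ into blocks of size $\log n/\log k$, then a single auxiliary variable per block can record which of the $k$ intended cover-sets is responsible for each element in that block, and $k$ raised to the block size is exactly $n$, so the auxiliary variable's domain stays polynomial in the input.

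Concretely, I would introduce $k$ primary variables $x_1,\ldots,x_k$ with domain $S$, meant to encode the intended $k$-cover. The universe $U$ of size $f(k)\log n$ is partitioned into $f(k)\log k$ blocks $P_1,\ldots,P_{f(k)\log k}$, each of size $\lceil \log n/\log k\rceil$ (padding if necessary). For each block $P_j$ I add an auxiliary variable $y_j$ with domain $[k]^{|P_j|}$, interpreted as ``element $u\in P_j$ is covered by $x_{y_j(u)}$''; its domain has size $k^{\lceil\log n/\log k\rceil}=O(n)$. The constraint between $x_i$ and $y_j$ permits a value pair $(s,g)$ iff $u\in s$ whenever $u\in P_j$ satisfies $g(u)=i$. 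I then produce $G$ via the canonical construction: one vertex bag per variable (one vertex per domain value), with an edge between vertices in different bags exactly when the pair of values satisfies the constraint between the two variables. A $k'$-clique of $G$ is then in bijection with a satisfying assignment: in the completeness case, a $k$-cover $s_1,\ldots,s_k$ together with any choice $i(u)\in\{i:u\in s_i\}$ yields the assignment $x_i \mapsto s_i$, $y_j(u) \mapsto i(u)$; in the soundness case, any satisfying assignment forces $\{x_1,\ldots,x_k\}$ to cover every $u\in U$, because for the unique block $P_j$ containing $u$ the constraint at $(x_{y_j(u)}, y_j)$ forces $u \in x_{y_j(u)}$.

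I expect the main technical tension to be tuning the block size $b$ so that both the resulting graph and the new clique parameter fit the FPT-reduction budget simultaneously. On the one hand we need $k^b \le n^{O(1)}$ so that every auxiliary bag, and hence $G$ itself, has polynomial size; on the other hand we need $|U|/b$ to depend only on $k$ so that $k'$ does. The choice $b=\lceil\log n/\log k\rceil$ just barely meets both requirements, and it is precisely the small universe size $|U|=f(k)\log n$ that leaves enough room for such a balance to exist. With $b$ fixed, verifying that $k'=k+f(k)\log k$ depends only on $k$, that $G$ is built in time polynomial in $|\Gamma|$ (hence in FPT time), and that the completeness/soundness equivalence above is tight, is then routine bookkeeping with no further combinatorial content.
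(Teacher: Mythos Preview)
Your proposal is correct and is essentially identical to the paper's proof: the paper also partitions $U$ into $h=f(k)\log k$ blocks of size $\log n/\log k$, introduces $k$ vertex groups $V_i$ (copies of $S$) and $h$ vertex groups $W_j=[k]^{\log n/\log k}$, and places an edge between $v\in V_i$ and $w\in W_j$ exactly when $w[\ell]=i$ implies $v$ covers the $(j,\ell)$-th element---precisely your constraint between $x_i$ and $y_j$. The only cosmetic difference is that you route the argument through an explicit $2$-\textsc{CSP} formulation before invoking the canonical clique reduction, whereas the paper writes down the resulting graph directly.
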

\begin{proof}
Let $h=f(k)\cdot \log k$. We divide $U$ into $h$ groups of size $\log n/\log k$ and index every element in $U$ by a pair $(i,j)$ where $i \in [h],j \in [\log n/\log k]$. We define a graph $G$ as follows.
\begin{itemize}
    \item $V(G)=V_1\cup V_2,\cdots,V_k\cup W_1\cup W_2,\cdots,W_h$. 
    \item For every $i \in [k]$, $V_i$ is a copy of $S$. 
    \item For every $i\in[h]$, $W_i=[k]^{\log n/\log k}$. 
    \item Make each $V_i$ and $W_j$ an independent set. Add edges between different $V_i$ and $V_j$, and between different $W_i$ and $W_j$. 
    \item For every $v\in V_i$ and $w\in W_j$, add an edge between them if and only if for all $\ell \in [\log n/\log k]$, $w[\ell]=i$ implies $v$ can cover the $(j,\ell)$-th element in $U$.
\end{itemize}
The running time of this reduction is at most $O(kh\cdot k^{\log n/\log k}\cdot |G|\cdot |U|)=k\log k\cdot f(k)\cdot n^{O(1)}$.

Suppose $\Gamma$ has a size-$k$ solution $v_1,\ldots,v_k\in S$. We define $w_1\in W_1,\ldots,w_h\in W_h$ as follows. For every $j\in[h]$ and $\ell\in[\log n/\log k]$, let $w_j[\ell]=i$ such that $v_i$ can cover the $(j,\ell)$-th element in $U$. By our construction, each $w_j$ is adjacent to all  $v_1,\ldots,v_k$. Thus, we obtain a $k'$-clique in graph $G$.

Suppose $G$ contains a clique $X$ of size $k'$. By our construction, $|X \cap V_i|=|X \cap W_j|=1$ for every $i\in[k],j \in [h]$. Let $v_i\in |X \cap V_i|,w_j\in |X \cap W_j|$ be the vertices in the clique. For every $j\in[h]$ and $\ell\in[\log n/\log k]$, let $i=w_j[\ell]$, then by the existence of an edge between $v_i$ and $w_j$, the $(j,\ell)$-th element in $U$ can be covered by $v_i$. Thus, $U$ can be covered by $k$ sets $v_1,\ldots,v_k$.
\end{proof}
\end{document}